\pgfplotsset{width=9cm,compat=1.5.1}
\newtheorem{definition}{Definition}
\newtheorem{lemma}{Lemma}
\newtheorem{theorem}{Theorem}
\newtheorem{corollary}{Corollary}
\newtheorem{example}{Example}
\newcommand{\tr}{\operatorname{Tr}}
\newcommand{\defeq}{\stackrel{\smash{\textnormal{\tiny def}}}{=}}
\begin{document}

\title{Pairwise Completely Positive Matrices and \\ Conjugate Local Diagonal Unitary Invariant Quantum States}

\author{
	Nathaniel Johnston\footnote{Department of Mathematics \& Computer Science, Mount Allison University, Sackville, NB, Canada E4L 1E4}\textsuperscript{$\ \ $}\footnote{Department of Mathematics \& Statistics, University of Guelph, Guelph, ON, Canada N1G 2W1}
	\quad and\quad
	Olivia MacLean\footnotemark[1]
}

\date{July 18, 2018}

\maketitle

\begin{abstract}
	We introduce a generalization of the set of completely positive matrices that we call ``pairwise completely positive'' (PCP) matrices. These are pairs of matrices that share a joint decomposition so that one of them is necessarily positive semidefinite while the other one is necessarily entrywise non-negative. We explore basic properties of these matrix pairs and develop several testable necessary and sufficient conditions that help determine whether or not a pair is PCP. We then establish a connection with quantum entanglement by showing that determining whether or not a pair of matrices is pairwise completely positive is equivalent to determining whether or not a certain type of quantum state, called a conjugate local diagonal unitary invariant state, is separable. Many of the most important quantum states in entanglement theory are of this type, including isotropic states, mixed Dicke states (up to partial transposition), and maximally correlated states. As a specific application of our results, we show that a wide family of states that have absolutely positive partial transpose are in fact separable.
\end{abstract}

\section{Introduction}

One of the most active areas of modern linear algebra research concerns \emph{completely positive (CP)} matrices \cite{BDS15}, which are matrices with a positive semidefinite decomposition consisting entirely of non-negative real numbers (or equivalently, matrices that are the Gram matrix of a set of vectors residing in the non-negative orthant of $\mathbb{R}^n$) \cite{Ber88,BS03}. Completely positive matrices, and their dual cone of \emph{copositive matrices}, have received so much attention lately primarily due to their role in mathematical optimization. In particular, many difficult optimization problems can be rephrased as linear programs with completely positive or copositive constraints, which shifts all of the difficulty of the optimization problem into the problem of understanding and characterizing these sets of matrices \cite{Bom12,Dur10}.

Very recently, it was shown that completely positive matrices also play an important role in quantum information theory, where one of the central research questions asks how to determine whether or not a quantum state is separable. It was shown in \cite{Yu16,TAQLS17} that there is a natural family of states called \emph{mixed Dicke states} whose separability is determined exactly by the complete positivity of a closely related matrix. Our contribution is to generalize the set of completely positive matrices to a set that we call \emph{pairwise completely positive} (PCP) matrices, establish the basic properties of these matrix pairs, show that these matrices are also connected to the separability problem in quantum information theory, and use that connection to make progress on a question about absolutely separable quantum states.

This paper is organized as follows. In Section~\ref{sec:math_prelims} we introduce the notation and mathematical preliminaries necessary to discuss these topics more formally. In Section~\ref{sec:ccp_matrices} we define pairwise completely positive matrices, which are the central subject of this paper, and prove some of their basic properties. Some of these basic properties of PCP matrices can be viewed as necessary conditions that matrix pairs must satisfy in order to be PCP, so in Section~\ref{sec:suff_conditions} we flip this around and introduce some sufficient conditions that \emph{guarantee} that a pair of matrices is PCP. In Section~\ref{sec:quantum_sep}, we show that determining whether or not pairs of matrices are PCP is equivalent to determining whether or not quantum states that are invariant under the action of local diagonal unitaries are separable. We present some examples to illustrate how our results apply to families of states like the isotropic states that are well-known. In Section~\ref{sec:abs_sep} we show how our results can be applied to the absolute separability problem, showing that a large family of states that are ``absolutely PPT'' are necessarily separable. Finally, we close in Section~\ref{sec:conclusions} by discussing some open problems are future directions for research coming out of this work.

\section{Mathematical Preliminaries}\label{sec:math_prelims}

We use bold letters like $\mathbf{v},\mathbf{w} \in \mathbb{C}^n$ to denote vectors, non-bold uppercase letters like $A,B \in M_n(\mathbb{C})$ to denote matrices, and non-bold lowercase letters like $c,d \in \mathbb{C}$ to denote scalars. Subscripts on non-bold letters indicate particular entries of a vector or matrix (e.g., $v_1,v_2,v_3$ denote the first $3$ coordinates of the vector $\mathbf{v}$), while subscripts of bold letters denote particular vectors (e.g., $\mathbf{v_1}, \mathbf{v_2}, \mathbf{v_3}$ are three vectors). Double subscripts are sometimes used to denote specific entries of vectors that are themselves denoted by subscripts (e.g., $v_{3,7}$ refers to the $7$-th entry of the vector $\mathbf{v_3}$). Sometimes it will be convenient to use square brackets to help us denote specific entries of a vector or matrix---for example, $[\mathbf{v}]_j := v_j$. We use $I_n$ to denote the $n \times n$ identity matrix and $\mathbf{1}$ to denote the all-ones vector: $\mathbf{1} := (1,1,\ldots,1)^T$.

The \emph{Hadamard product} of two vectors $\mathbf{v},\mathbf{w} \in \mathbb{C}^n$ or two matrices $A,B \in M_{m,n}(\mathbb{C})$ (denoted by $\mathbf{v}\odot\mathbf{w}$ or $A \odot B$) is simply their entrywise product: $[\mathbf{v} \odot \mathbf{w}]_j := v_jw_j$ for all $j$ and $[A \odot B]_{i,j} := a_{i,j}b_{i,j}$ for all $i,j$. We use standard inequality signs like ``$\geq$'' and ``$\leq$'' to denote entrywise inequalities between matrices, and we use ``$\succeq$'' and ``$\preceq$'' to denote inequalities in the Loewner partial order on matrices (e.g., $A \succeq B$ means that $A-B$ is positive semidefinite).

There are two matrix norms that will make frequent appearances throughout this work. The first of these is the \emph{entrywise $1$-norm} $\|\cdot\|_{1}$, defined simply via
\begin{align*}
	\|X\|_{1} \defeq \sum_{i,j=1}^n |x_{i,j}|,
\end{align*}
and the other one is the \emph{trace norm} $\|\cdot\|_{\textup{tr}}$, which is the sum of the singular values of the matrix:
\begin{align*}
	\|X\|_{\textup{tr}} \defeq \sum_{k=1}^n \sigma_k(A).
\end{align*}
However, there is another characterization of the trace norm that will also be useful for us, which is as the smallest possible rank-one decomposition of $X$:
\begin{align}\label{eq:tr_norm_as_inf}
	\|X\|_{\textup{tr}} = \inf\left\{ \sum_k \|\mathbf{x_k}\|\|\mathbf{y_k}\| : X = \sum_k \mathbf{x_k}\mathbf{y}_{\mathbf{k}}^*, \{\mathbf{x_k}\},\{\mathbf{y_k}\} \subseteq \mathbb{C}^n \right\},
\end{align}
where the infimum is taken over all rank-one sum decompositions of $X$. It is worth noting that the singular value decomposition of $X$ is in fact a rank-one sum decomposition that attains this infimum (i.e., choose $\{\mathbf{x_k}\}$ to be its left singular vectors and $\{\mathbf{y_k}\}$ to be its right singular vectors, suitably scaled by the associated singular values). Also, this characterization of the trace norm immediately gives the following simple relationship with the entrywise $1$-norm:
\begin{align}\label{eq:tr_norm_1_norm_ineq}
	\|X\|_{\textup{tr}} = \inf\left\{ \sum_k \|\mathbf{x_k}\|\|\mathbf{y_k}\| : X = \sum_k \mathbf{x_k}\mathbf{y_k}^* \right\} \leq \sum_{i,j=1}^n |x_{i,j}| = \|X\|_1,
\end{align}
where the inequality comes from choosing the na\"{i}ve rank-one sum decomposition of $X$ in terms of the standard basis vectors $\{\mathbf{e}_{\mathbf{i}}\}$:
\[
	X = \sum_{i,j=1}^n x_{i,j} \mathbf{e}_{\mathbf{i}}\mathbf{e}_{\mathbf{j}}^*.
\]

It follows that the quantity $\|X\|_1 - \|X\|_{\textup{tr}}$ is always non-negative. Furthermore, it can roughly be thought of as a measure of how far away from being diagonal $X$ is. For example, if $X$ is diagonal then $\|X\|_1 - \|X\|_{\textup{tr}} = 0$, whereas if $X$ is the all-ones matrix then $\|X\|_1 / \|X\|_{\textup{tr}} = n$, which is as large as possible. In fact, if $X$ is positive semidefinite then this quantity is simply the sum of the absolute values of the off-diagonal entries of $X$: $\|X\|_1 - \|X\|_{\textup{tr}} = \sum_{i \neq j}|x_{i,j}|$. This quantity is of interest in quantum information theory, and is called the $\ell_1$\emph{-norm of coherence} \cite{BCP14}.

\section{Definition and Basic Results}\label{sec:ccp_matrices}

A matrix $X \in M_n(\mathbb{R})$ is called \emph{completely positive} \cite{BS03} if there exists an entrywise non-negative matrix $V \in M_{n,m}(\mathbb{R})$ (with $m$ arbitrary) such that $X = VV^*$. Equivalently, there exist entrywise non-negative vectors $\{\mathbf{v_k}\} \subseteq \mathbb{R}^n$ (which are the columns of $V$) such that
\begin{align*}
	X = \sum_{k=1}^m \mathbf{v_k}\mathbf{v}_{\mathbf{k}}^*.
\end{align*}
In particular, completely positive matrices are (by construction) both positive semidefinite and entrywise non-negative. The main purpose of this paper is to introduce and explore the following generalization of this concept:

\begin{definition}\label{lem:ccp_matrices}
	Suppose $X,Y \in M_n(\mathbb{C})$. We say that the pair $(X,Y)$ is \emph{pairwise completely positive} (or \emph{PCP} for short) if there exist matrices $V,W \in M_{n,m}(\mathbb{C})$ (with $m$ arbitrary) such that
    \[
	   X = (V \odot W)(V \odot W)^* \qquad \text{and} \qquad Y = (V \odot \overline{V})(W \odot \overline{W})^*.
	\]
	Equivalently, $(X,Y)$ is PCP if there exist families of vectors $\{\mathbf{v_k}\},\{\mathbf{w_k}\} \subseteq \mathbb{C}^n$ (which are the columns of $V$ and $W$, respectively) such that
	\begin{align}\label{eq:ccp_defn_sum}
		X = \sum_{k=1}^m \mathbf{(v_k\odot w_k)}\mathbf{(v_k\odot w_k)}^* \qquad \text{and} \qquad Y = \sum_{k=1}^m \mathbf{(v_k\odot \overline{v_k})}\mathbf{(w_k\odot\overline{w_k})}^*.
	\end{align}
\end{definition}

In a sense, each of the matrices $X$ and $Y$ in the above definition captures a ``piece'' of the complete positivity property. While $X$ is (by construction) positive semidefinite, it need not be entrywise non-negative, and similarly $Y$ is clearly entrywise non-negative, but is not necessarily positive semidefinite (or even symmetric). There are also a few other easily-checkable properties that $X$ and $Y$ must satisfy in order to have a chance at being pairwise completely positive, some of which are not obvious. We summarize these properties in the following theorem.

\begin{theorem}\label{thm:ccp_properties}
	Suppose $X,Y \in M_n(\mathbb{C})$ are such that $(X,Y)$ is pairwise completely positive. Then each of the following conditions hold:
	\begin{enumerate}
		\item[a)] $X$ is (Hermitian) positive semidefinite.
		
		\item[b)] $Y$ is real and entrywise non-negative.
		
		\item[c)] $X$ and $Y$ have the same diagonal entries: $x_{i,i} = y_{i,i}$ for all $1 \leq i \leq n$.
		
		\item[d)] $Y$ is ``almost'' entrywise larger than $X$: $|x_{i,j}|^2 \leq y_{i,j}y_{j,i}$ for all $1 \leq i,j \leq n$.
		
		\item[e)] $X$ is ``more diagonal'' than $Y$: $\|X\|_1 - \|X\|_{\textup{tr}} \leq \|Y\|_1 - \|Y\|_{\textup{tr}}$.
	\end{enumerate}
\end{theorem}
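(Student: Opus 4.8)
The plan is to work directly from the decomposition in Definition~\ref{lem:ccp_matrices}. Set $\mathbf{a_k} := \mathbf{v_k}\odot\mathbf{w_k}$, $\mathbf{p_k} := \mathbf{v_k}\odot\overline{\mathbf{v_k}}$, and $\mathbf{q_k} := \mathbf{w_k}\odot\overline{\mathbf{w_k}}$, so that $X = \sum_k \mathbf{a_k}\mathbf{a_k}^*$ and $Y = \sum_k \mathbf{p_k}\mathbf{q_k}^*$, where each $p_{k,i} = |v_{k,i}|^2$ and $q_{k,i} = |w_{k,i}|^2$ is a non-negative real. Parts (a)--(c) are then immediate: $X$ is a sum of rank-one positive semidefinite matrices; the $(i,j)$ entry of $Y$ equals $\sum_k p_{k,i}q_{k,j} \ge 0$; and both $x_{i,i}$ and $y_{i,i}$ equal $\sum_k |v_{k,i}|^2|w_{k,i}|^2$. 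For (d) I would write $x_{i,j} = \sum_k (v_{k,i}\overline{w_{k,j}})(w_{k,i}\overline{v_{k,j}})$ and apply Cauchy--Schwarz over the index $k$; the two resulting factors are exactly $\sum_k |v_{k,i}|^2|w_{k,j}|^2 = y_{i,j}$ and $\sum_k |w_{k,i}|^2|v_{k,j}|^2 = y_{j,i}$, giving $|x_{i,j}|^2 \le y_{i,j}y_{j,i}$.

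Part (e) is the substantive one. Since $X$ is positive semidefinite, the preliminaries give $\|X\|_1 - \|X\|_{\textup{tr}} = \sum_{i\ne j}|x_{i,j}|$, so it suffices to prove $\sum_{i\ne j}|x_{i,j}| \le \|Y\|_1 - \|Y\|_{\textup{tr}}$. I would bound the quantities involved as follows. Because $Y$ is entrywise non-negative, $\|Y\|_1 = \sum_{i,j} y_{i,j} = \sum_k \|\mathbf{v_k}\|^2\|\mathbf{w_k}\|^2$ exactly. Feeding the rank-one decomposition $Y = \sum_k \mathbf{p_k}\mathbf{q_k}^*$ into the infimum characterization of the trace norm from the preliminaries gives $\|Y\|_{\textup{tr}} \le \sum_k \|\mathbf{p_k}\|\|\mathbf{q_k}\|$. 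Finally, the triangle inequality $|x_{i,j}| \le \sum_k |a_{k,i}||a_{k,j}|$ yields $\sum_{i\ne j}|x_{i,j}| \le \sum_k\big[(\sum_i |a_{k,i}|)^2 - \sum_i |a_{k,i}|^2\big]$. Combining these three estimates reduces the claim to a single scalar inequality holding for each $k$ separately.

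Writing $p_i = |v_{k,i}|^2$ and $q_i = |w_{k,i}|^2$, that scalar inequality reads
\begin{align*}
	\Big(\sum_i \sqrt{p_iq_i}\Big)^2 - \sum_i p_iq_i \;\le\; \Big(\sum_i p_i\Big)\Big(\sum_i q_i\Big) - \sqrt{\Big(\sum_i p_i^2\Big)\Big(\sum_i q_i^2\Big)},
\end{align*}
and establishing it is the main obstacle. My plan is to substitute $r_i = \sqrt{p_i}$, $s_i = \sqrt{q_i}$ and recast it as $\|\mathbf{r}^{\odot 2}\|\,\|\mathbf{s}^{\odot 2}\| - \langle \mathbf{r}^{\odot 2},\mathbf{s}^{\odot 2}\rangle \le \|\mathbf{r}\|^2\|\mathbf{s}\|^2 - \langle\mathbf{r},\mathbf{s}\rangle^2$, writing $\mathbf{r}^{\odot 2} := \mathbf{r}\odot\mathbf{r}$. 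Lagrange's identity rewrites the right-hand side as $\sum_{i<j}(r_is_j - r_js_i)^2$; for the left-hand side I would use the factorization $\|\mathbf{u}\|\|\mathbf{t}\| - \langle\mathbf{u},\mathbf{t}\rangle = (\|\mathbf{u}\|^2\|\mathbf{t}\|^2 - \langle\mathbf{u},\mathbf{t}\rangle^2)/(\|\mathbf{u}\|\|\mathbf{t}\| + \langle\mathbf{u},\mathbf{t}\rangle)$ with $\mathbf{u} = \mathbf{r}^{\odot 2}$, $\mathbf{t} = \mathbf{s}^{\odot 2}$, and then Lagrange's identity once more via $r_i^2s_j^2 - r_j^2s_i^2 = (r_is_j - r_js_i)(r_is_j + r_js_i)$. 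It then suffices to verify the per-pair bound $(r_is_j + r_js_i)^2 \le \|\mathbf{r}^{\odot 2}\|\|\mathbf{s}^{\odot 2}\| + \langle\mathbf{r}^{\odot 2},\mathbf{s}^{\odot 2}\rangle$; discarding all but the two coordinates $i,j$ (which only decreases the right-hand side) reduces this to the elementary two-variable inequality $ad + bc + 2\sqrt{abcd} \le \sqrt{(a^2+b^2)(c^2+d^2)} + ac + bd$, which in turn follows from $ad + bc - (\sqrt{ac} - \sqrt{bd})^2 \le ad + bc \le \sqrt{(a^2+b^2)(c^2+d^2)}$ by Cauchy--Schwarz. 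I expect the only real difficulty to be this scalar inequality; the chief insight is the twofold use of Lagrange's identity together with the difference-of-squares factorization, after which every remaining step is elementary.
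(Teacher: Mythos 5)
Your proposal is correct, and while parts (a)--(d) and the reduction in part (e) follow the paper essentially verbatim (same decomposition, same Cauchy--Schwarz over the index $k$ for (d), same use of the trace-norm infimum characterization and triangle inequality to reduce (e) to a per-$k$ scalar inequality), your proof of that scalar inequality---which is exactly the strengthened Cauchy--Schwarz inequality the paper isolates as Lemma~\ref{lem:cs_stronger} in its appendix---is genuinely different. The paper (crediting MathOverflow users) proves the lemma by applying Lagrange's identity to both sides, squaring, and then exhibiting a somewhat unmotivated sum-of-squares identity that certifies the resulting global polynomial inequality. You instead rationalize the left-hand side via
\[
\left\lVert\mathbf{u}\right\rVert\left\lVert\mathbf{t}\right\rVert - \langle\mathbf{u},\mathbf{t}\rangle
= \frac{\left\lVert\mathbf{u}\right\rVert^2\left\lVert\mathbf{t}\right\rVert^2 - \langle\mathbf{u},\mathbf{t}\rangle^2}{\left\lVert\mathbf{u}\right\rVert\left\lVert\mathbf{t}\right\rVert + \langle\mathbf{u},\mathbf{t}\rangle},
\]
apply Lagrange's identity to the numerator, factor each term as a difference of squares, and then match the resulting sum termwise against the Lagrange expansion of the right-hand side; this localizes everything to the per-pair bound $(r_is_j+r_js_i)^2 \leq \lVert\mathbf{r}^{\odot 2}\rVert\lVert\mathbf{s}^{\odot 2}\rVert + \langle\mathbf{r}^{\odot 2},\mathbf{s}^{\odot 2}\rangle$, which collapses to a two-variable Cauchy--Schwarz statement. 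I verified the chain $ad+bc+2\sqrt{abcd}-ac-bd = ad+bc-(\sqrt{ac}-\sqrt{bd})^2 \leq ad+bc \leq \sqrt{(a^2+b^2)(c^2+d^2)}$ and the termwise summation; both are sound. Your route buys transparency: every step is a visibly checkable pairwise estimate rather than a global SOS certificate, and it makes clear \emph{where} the inequality can be tight. The only loose end is the degenerate case where the denominator $\lVert\mathbf{r}^{\odot 2}\rVert\lVert\mathbf{s}^{\odot 2}\rVert + \langle\mathbf{r}^{\odot 2},\mathbf{s}^{\odot 2}\rangle$ vanishes (i.e., $\mathbf{r}=\mathbf{0}$ or $\mathbf{s}=\mathbf{0}$), where both sides of the lemma are zero and the claim is trivial; you should state this explicitly before dividing.
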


Before proving the above result, it is perhaps worth noting that if $Y$ is symmetric then part~(d) of this theorem says exactly that $Y$ is entrywise larger than $X$: $y_{i,j} \geq |x_{i,j}|$ for all $1 \leq i,j \leq n$. Also, Inequality~\eqref{eq:tr_norm_1_norm_ineq} tells us that the quantities $\|X\|_1 - \|X\|_{\textup{tr}}$ and $\|Y\|_1 - \|Y\|_{\textup{tr}}$ in part~(e) of the theorem are both non-negative. Also, since $X$ is necessarily positive semidefinite, we have $\|X\|_{\textup{tr}} = \tr(X)$ (but no similar simplification can be made for $\|Y\|_{\textup{tr}}$ in general).

\begin{proof}
	We already noted properties~(a) and~(b), and they follow straightforwardly from the definition of PCP matrices.
	
	For property~(c), note that if $(X,Y)$ is PCP then we can compute the diagonal entries of $X$ and $Y$ in terms of the entries of the vectors $\{\mathbf{v_k}\}$ and $\{\mathbf{w_k}\}$:
	\begin{align*}
		x_{i,i} & = \sum_k \big[ \mathbf{(v_k\odot w_k)}\mathbf{(v_k\odot w_k)}^* \big]_{i,i} = \sum_k |v_{k,i}|^2|w_{k,i}|^2 \quad \text{and} \\
		y_{i,i} & = \sum_k \big[ \mathbf{(v_k\odot \overline{v_k})}\mathbf{(w_k\odot\overline{w_k})}^* \big]_{i,i} = \sum_k |v_{k,i}|^2|w_{k,i}|^2,
	\end{align*}
	which are equal to each other.
	
	For property~(d), we again directly compute the relevant entries of $X$ and $Y$ in terms of the entries of $\{\mathbf{v_k}\}$ and $\{\mathbf{w_k}\}$:
	\begin{align*}
		x_{i,j} & = \sum_k \big[ \mathbf{(v_k\odot w_k)}\mathbf{(v_k\odot w_k)}^* \big]_{i,j} = \sum_k v_{k,i}w_{k,i}\overline{v_{k,j}w_{k,j}} \quad \text{and} \\
		y_{i,j} & = \sum_k \big[ \mathbf{(v_k\odot \overline{v_k})}\mathbf{(w_k\odot\overline{w_k})}^* \big]_{i,j} = \sum_k |v_{k,i}|^2|w_{k,j}|^2.
	\end{align*}
	Then the fact that $|x_{i,j}|^2 \leq y_{i,j}y_{j,i}$ is then simply a result of applying the Cauchy--Schwarz inequality to the vectors
	\begin{align*}
		(v_{1,i}\overline{w_{1,j}}, v_{2,i}\overline{w_{2,j}}, v_{3,i}\overline{w_{3,j}}, \ldots) \quad \text{and} \quad (w_{1,i}\overline{v_{1,j}}, w_{2,i}\overline{v_{2,j}}, w_{3,i}\overline{v_{3,j}}, \ldots).
	\end{align*}
	
	For property~(e), we use Equation~\eqref{eq:tr_norm_as_inf} to bound $\|Y\|_{\textup{tr}}$ as follows:
	\begin{align*}
		\|Y\|_{\textup{tr}} & \leq \sum_k \|\mathbf{v_k\odot \overline{v_k}}\|\|\mathbf{w_k\odot \overline{w_k}}\|.
	\end{align*}
	Thus
	\begin{align*}
	    \|Y\|_1 - \|Y\|_{\textup{tr}} & \geq \left\|\sum_k (\mathbf{v_k\odot \overline{v_k}})(\mathbf{w_k\odot \overline{w_k}})^*\right\|_1 - \sum_k \|\mathbf{v_k\odot \overline{v_k}}\|\|\mathbf{w_k\odot \overline{w_k}}\| \\
	    & = \sum_k \|\mathbf{v_k\odot \overline{v_k}}\|_1\|\mathbf{w_k\odot \overline{w_k}}\|_1 - \sum_k \|\mathbf{v_k\odot \overline{v_k}}\|\|\mathbf{w_k\odot \overline{w_k}}\|,
	\end{align*}
	where the final equality follows from that fact that each term in the sum is entrywise non-negative.
	
	On the other hand,
	\begin{align*}
		\|X\|_1 - \|X\|_{\textup{tr}} & = \left\|\sum_{k} \mathbf{(v_k\odot w_k)}\mathbf{(v_k\odot w_k)}^*\right\|_1 - \sum_{k} \|\mathbf{(v_k\odot w_k)}\|^2 \\
		& \leq \sum_{k} \|\mathbf{(v_k\odot w_k)}\|_1^2 - \sum_{k} \|\mathbf{(v_k\odot w_k)}\|^2,
	\end{align*}
	where the final inequality is just the triangle inequality.
	
	By comparing these expressions for $\|Y\|_1 - \|Y\|_{\textup{tr}}$ and $\|X\|_1 - \|X\|_{\textup{tr}}$, we see that it suffices to prove that
	\begin{align*}
	    \|\mathbf{(v\odot w)}\|_1^2 - \|\mathbf{(v\odot w)}\|^2 \leq \|\mathbf{v\odot \overline{v}}\|_1\|\mathbf{w\odot \overline{w}}\|_1 - \|\mathbf{v\odot \overline{v}}\|\|\mathbf{w\odot \overline{w}}\|
	\end{align*}
	for all $\mathbf{v},\mathbf{w} \in \mathbb{C}^n$. By using the easily-verified facts that $\|\mathbf{(v\odot w)}\|^2 = \langle \mathbf{v\odot \overline{v}}, \mathbf{w\odot \overline{w}} \rangle$, $\|\mathbf{v\odot \overline{v}}\|_1 = \|\mathbf{v}\|^2$, $\|\mathbf{w\odot \overline{w}}\|_1 = \|\mathbf{w}\|^2$, $\|\mathbf{(v\odot w)}\|_1 \geq |\langle \mathbf{v}, \mathbf{w} \rangle|$, and rearranging slightly, we see that it suffices to show that
	\begin{align}\label{eq:stronger_cs_ineq}
	    \|\mathbf{v\odot \overline{v}}\|\|\mathbf{w\odot \overline{w}}\| - \langle \mathbf{v\odot \overline{v}}, \mathbf{w\odot \overline{w}} \rangle \leq \|\mathbf{v}\|^2\|\mathbf{w}\|^2 - |\langle \mathbf{v}, \mathbf{w} \rangle|^2
	\end{align}
	for all $\mathbf{v},\mathbf{w} \in \mathbb{C}^n$. The proof of this inequality is somewhat involved, so we leave it to the appendix, as Lemma~\ref{lem:cs_stronger}.
\end{proof}

It is worth noting that the Inequality~\eqref{eq:stronger_cs_ineq} (i.e., Lemma~\ref{lem:cs_stronger} in the appendix) is perhaps of independent interest, as it is a non-trivial strengthening of the Cauchy--Schwarz inequality. Indeed, the Cauchy--Schwarz inequality itself tells us that the left-hand-side of~\eqref{eq:stronger_cs_ineq} is non-negative, so~\eqref{eq:stronger_cs_ineq} gives a non-negative lower bound on $\|\mathbf{v}\|^2\|\mathbf{w}\|^2 - |\langle \mathbf{v}, \mathbf{w} \rangle|^2$ (whereas the Cauchy--Schwarz inequality only gives a lower bound of $0$ on it).

\begin{example}\label{exam:3dim_eg}
	Let $a > 0$ be a real number and consider the pair of matrices
	\begin{align*}
	X = \begin{bmatrix}
	1 & 1 & 1 \\
	1 & 1 & 1 \\
	1 & 1 & 1
	\end{bmatrix} \quad \text{and} \quad Y = \begin{bmatrix}
		1 & a & 1/a \\
		1/a & 1 & a \\
		a & 1/a & 1
	\end{bmatrix}.
	\end{align*}
	We now determine which values of $a$ result in $(X,Y)$ being PCP.
	
	It is straightforward to show that $(X,Y)$ satisfies properties (a), (b), (c), and (d) of Theorem~\ref{thm:ccp_properties}, regardless of the value of $a > 0$. However, direct calculation shows that
	\begin{align*}
		\|X\|_1 = 9, \ \ \|X\|_{\textup{tr}} = 3, \ \ \|Y\|_1 = \frac{3(1 + a + a^2)}{a}, \quad \text{and} \quad \|Y\|_{\textup{tr}} = \frac{1 + a + a^2 + 2|a-1|\sqrt{1 + a + a^2}}{a}.
	\end{align*}
	Thus $\|X\|_1 - \|X\|_{\textup{tr}} = 9 -3 = 6$, whereas
	\begin{align*}
		\|Y\|_1 - \|Y\|_{\textup{tr}} = \frac{2(1 + a + a^2 - |a-1|\sqrt{1 + a + a^2})}{a},
	\end{align*}
	which is strictly less than $6$ whenever $a \neq 1$. Thus Theorem~\ref{thm:ccp_properties}(e) tells us that $(X,Y)$ is not PCP when $a \neq 1$. On the other hand, $(X,Y)$ \emph{is} PCP when $a = 1$, since in this case there is a trivial PCP decomposition of $(X,Y)$: $\mathbf{v_1} = \mathbf{w_1} = (1,1,1)^T$.
\end{example}

In the previous example, we saw that $(X,Y)$ was pairwise completely positive when $X = Y$ was completely positive. The following theorem shows that this fact holds in general, and thus pairwise completely positive matrices are indeed a generalization of completely positive matrices.

\begin{theorem}\label{thm:cp_is_ccp}
	Suppose $X \in M_n(\mathbb{C})$. Then $X$ is completely positive if and only if $(X,X)$ is pairwise completely positive.
\end{theorem}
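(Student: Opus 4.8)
The plan is to prove the two implications separately, the reverse one being the substantive direction.

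For the forward direction ($X$ completely positive $\Rightarrow$ $(X,X)$ PCP) I would just choose the right vectors. Starting from a completely positive decomposition $X = \sum_{k} \mathbf{u_k}\mathbf{u_k}^*$ with each $\mathbf{u_k}\in\mathbb{R}^n$ entrywise non-negative, set $\mathbf{v_k} = \mathbf{w_k}$ equal to the entrywise square root of $\mathbf{u_k}$, which is legitimate precisely because $\mathbf{u_k}\geq 0$. Then $\mathbf{v_k}\odot\mathbf{w_k} = \mathbf{u_k}$ and $\mathbf{v_k}\odot\overline{\mathbf{v_k}} = \mathbf{w_k}\odot\overline{\mathbf{w_k}} = \mathbf{u_k}$, so both sums in~\eqref{eq:ccp_defn_sum} collapse to $\sum_k \mathbf{u_k}\mathbf{u_k}^* = X$, exhibiting $(X,X)$ as PCP.

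For the reverse direction I assume $(X,X)$ is PCP, fix a decomposition via vectors $\{\mathbf{v_k}\},\{\mathbf{w_k}\}$, and try to build a genuine completely positive decomposition of $X$. The key observation is that Theorem~\ref{thm:ccp_properties}(d), applied with $Y=X$ and using that $X$ is Hermitian (part (a)) and entrywise non-negative (part (b)), degenerates into the equality $|x_{i,j}|^2 = x_{i,j}x_{j,i}$ for all $i,j$. Since this bound came from a single application of Cauchy--Schwarz, equality forces the two vectors in that step to be parallel; tracking the proportionality constant (which must equal $1$ wherever $x_{i,j}>0$) yields the relation
\begin{align*}
	v_{k,i}\overline{w_{k,j}} = v_{k,j}\overline{w_{k,i}} \qquad \text{for all } k \text{ and all } i,j \text{ with } x_{i,j} > 0.
\end{align*}
The final step is to convert the first sum $X = \sum_k (\mathbf{v_k}\odot\mathbf{w_k})(\mathbf{v_k}\odot\mathbf{w_k})^*$ into a non-negative one. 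Writing $z_{k,i} := v_{k,i}w_{k,i}$, the relation above shows that the phase of $v_{k,i}$ relative to $\overline{w_{k,i}}$ is a single constant $\rho_k$ independent of $i$, so that $z_{k,i} = \rho_k\microspace|w_{k,i}|^2$. Substituting back gives $z_{k,i}\overline{z_{k,j}} = |\rho_k|^2|w_{k,i}|^2|w_{k,j}|^2$, whence $X = \sum_k \mathbf{r_k}\mathbf{r_k}^{\t}$ with the manifestly non-negative vectors $r_{k,i} := |\rho_k|\microspace|w_{k,i}|^2$, which is exactly a completely positive decomposition.

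The main obstacle is making ``$\rho_k$ is independent of $i$'' precise in the presence of zeros: the relation only links indices $i,j$ joined by a positive entry of $X$, and individual coordinates $v_{k,i}$ or $w_{k,i}$ may vanish. I would handle this by first discarding indices with $x_{i,i}=0$ (whose rows and columns vanish because $X\succeq 0$, and which do not affect complete positivity), and then reducing to the case where the graph on $\{1,\dots,n\}$ with an edge for each positive entry of $X$ is connected --- otherwise $X$ is block diagonal after a permutation and each block is treated separately. On a connected component the relation propagates a common constant $\rho_k$ across all indices where the $k$-th term does not vanish, while the remaining indices contribute $0$ to that term; checking that this bookkeeping is consistent (in particular ruling out the configuration where exactly one of $v_{k,j},w_{k,j}$ is zero on an edge out of the support of the term) is the one genuinely fiddly part of the argument.
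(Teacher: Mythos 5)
Your forward direction is exactly the paper's (entrywise square roots), and your reverse direction starts from the same core idea as the paper's proof: equality in the Cauchy--Schwarz step behind Theorem~\ref{thm:ccp_properties}(d), applied to vectors indexed by the decomposition index $k$, yielding $v_{k,i}\overline{w_{k,j}}=v_{k,j}\overline{w_{k,i}}$. However, restricting this relation to pairs with $x_{i,j}>0$ and then trying to propagate a constant $\rho_k$ along the support graph creates a genuine gap that your bookkeeping cannot close. The problematic configuration is not the one you name (exactly one of $v_{k,j},w_{k,j}$ vanishing across an edge out of the support of term $k$ is indeed ruled out by the edge relation itself, since it makes one side zero and the other nonzero). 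The real problem is a term $k$ whose support splits into two ``islands'' inside a single connected component: for instance vertices $1,2,3$ with $x_{1,2},x_{2,3}>0$ and $x_{1,3}=0$, and a term $k$ with $v_{k,2}=w_{k,2}=0$ but all of $v_{k,1},w_{k,1},v_{k,3},w_{k,3}$ nonzero. Every edge relation involving this term then reads $0=0$, so nothing in your argument forces the phases of $z_{k,1}$ and $z_{k,3}$ to agree, and your key identity $z_{k,i}\overline{z_{k,j}}=|\rho_k|^2\microspace|w_{k,i}|^2|w_{k,j}|^2$ can fail for the cross-island pair $(1,3)$.

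The missing ingredient---which also makes the reduction to connected components unnecessary---is the constraint supplied by the \emph{zero} off-diagonal entries: if $x_{i,j}=0$ then, since $Y=X$, also $y_{i,j}=\sum_k |v_{k,i}|^2|w_{k,j}|^2=0$, which forces $v_{k,i}\overline{w_{k,j}}=0=v_{k,j}\overline{w_{k,i}}$ for \emph{every individual} $k$. Hence the relation $v_{k,i}\overline{w_{k,j}}=v_{k,j}\overline{w_{k,i}}$ holds for all pairs $i\neq j$, not just those joined by a positive entry (and in particular it kills the island configuration above, since it forces $v_{k,1}w_{k,3}=0$). Once the relation holds for all pairs, every $2\times 2$ minor of the $2\times n$ array with rows $\mathbf{v_k}$ and $\overline{\mathbf{w_k}}$ vanishes, so $\mathbf{v_k}=c_k\overline{\mathbf{w_k}}$ for some scalar $c_k$ (terms with $\mathbf{w_k}=0$ contribute nothing and may be discarded), giving the completely positive decomposition $X=\sum_k |c_k|^2(\overline{\mathbf{w_k}}\odot\mathbf{w_k})(\overline{\mathbf{w_k}}\odot\mathbf{w_k})^*$. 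This is precisely how the paper argues: it derives $\langle \mathbf{y_{i,j}},\mathbf{y_{i,j}}-\mathbf{y_{j,i}}\rangle=0$ for \emph{all} $i\neq j$ directly by equating the two expressions for $x_{i,j}$, with no case split on whether $x_{i,j}$ is positive, so no graph-theoretic patching is ever needed.
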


\begin{proof}
	For the ``only if'' direction, suppose $X$ is completely positive. Then there exist entrywise non-negative vectors $\{\mathbf{x_k}\} \subseteq \mathbb{C}^n$ such that
	\begin{align*}
		X = \sum_k \mathbf{x_k}\mathbf{x}_{\mathbf{k}}^*.
	\end{align*}
	Well, if we define $\mathbf{v_k} = \mathbf{w_k} = \sqrt{\mathbf{x_k}}$ (where we understand that this square root is taken entrywise), then it is straightforward to verify that
	\begin{align*}
		X = \sum_k \mathbf{(v_k\odot w_k)}\mathbf{(v_k\odot w_k)}^* \qquad \text{and} \qquad X = \sum_k \mathbf{(v_k\odot \overline{v_k})}\mathbf{(w_k\odot\overline{w_k})}^*,
	\end{align*}
	so $(X,X)$ is pairwise completely positive.
	
	For the ``if'' direction, suppose $(X,X)$ is PCP, so
	\[
	    X = \sum_k \mathbf{(v_k\odot w_k)}\mathbf{(v_k\odot w_k)}^* = \sum_k \mathbf{(v_k\odot \overline{v_k})}\mathbf{(w_k\odot\overline{w_k})}^*.
	\]
	Then for each $i \neq j$, the $(i,j)$-entry of $X$ is
	\begin{align*}
	    x_{i,j} = \sum_k v_{i,k}w_{i,k}\overline{v_{j,k}w_{j,k}} = \sum_k |v_{i,k}|^2|w_{j,k}|^2, \quad \text{so} \quad \sum_k v_{i,k}\overline{w_{j,k}}\big(\overline{v_{i,k}}w_{j,k} - \overline{v_{j,k}}w_{i,k}\big) = 0.
	\end{align*}
	If we define vectors $\mathbf{y}_{\mathbf{i,j}} = (v_{i,1}\overline{w_{j,1}},v_{i,2}\overline{w_{j,2}},v_{i,3}\overline{w_{j,3}},\ldots)$, then the above equality simply says that
	\begin{align*}
	    \big\langle \mathbf{y}_{\mathbf{i,j}}, \mathbf{y}_{\mathbf{i,j}} - \mathbf{y}_{\mathbf{j,i}} \big\rangle = 0 \quad \text{for all} \quad i \neq j.
	\end{align*}
	We thus conclude that $\langle \mathbf{y}_{\mathbf{i,j}}, \mathbf{y}_{\mathbf{j,i}} \rangle = \|\mathbf{y}_{\mathbf{i,j}}\|^2$, and a similar argument shows that $\langle \mathbf{y}_{\mathbf{i,j}}, \mathbf{y}_{\mathbf{j,i}} \rangle = \|\mathbf{y}_{\mathbf{j,i}}\|^2$ as well. The equality condition of the Cauchy--Schwarz inequality then tells us that $\mathbf{y}_{\mathbf{i,j}} = \mathbf{y}_{\mathbf{j,i}}$ for all $i \neq j$. In other words, $v_{i,k}\overline{w_{j,k}} = v_{j,k}\overline{w_{i,k}}$ for all $i,j,k$. Thus $\mathbf{v}_{\mathbf{k}}$ is proportional to $\overline{\mathbf{w}_{\mathbf{k}}}$ for all $k$; there exist scalars $\{c_k\} \subset \mathbb{C}$ such that $\mathbf{v}_{\mathbf{k}} = c_k\overline{\mathbf{w}_{\mathbf{k}}}$ for all $k$. Thus
	\[
	    X = \sum_k \mathbf{(v_k\odot w_k)}\mathbf{(v_k\odot w_k)}^* = \sum_k (c_k\overline{\mathbf{w}_{\mathbf{k}}}\odot \mathbf{w}_{\mathbf{k}})(c_k\overline{\mathbf{w}_{\mathbf{k}}}\odot \mathbf{w}_{\mathbf{k}})^* = \sum_k |c_k|^2(\overline{\mathbf{w}_{\mathbf{k}}}\odot \mathbf{w}_{\mathbf{k}})(\overline{\mathbf{w}_{\mathbf{k}}}\odot \mathbf{w}_{\mathbf{k}})^*,
	\]
	so $X$ is completely positive, since each vector $\overline{\mathbf{w}_{\mathbf{k}}}\odot \mathbf{w}_{\mathbf{k}}$ is entrywise non-negative.
\end{proof}

Since it is NP-hard to check whether or not a matrix $X$ is completely positive \cite{DG14}, the above result immediately implies that it is also NP-hard to check pairwise complete positivity of a pair $(X,Y)$. The following example illustrates some pairwise completely positive matrices for which $X \neq Y$.

\begin{example}\label{exam:werner_ccp}
	Let $a,b \in \mathbb{R}$ and let $J_n \in M_n(\mathbb{C})$ be the matrix with all of its entries equal to $1$. Consider the pair of matrices
	\begin{align*}
	    X = aI_n + bJ_n \quad \text{and} \quad Y = bI_n + aJ_n.
	\end{align*}
	In order for the pair $(X,Y)$ to satisfy property~(a) of Theorem~\ref{thm:ccp_properties} we need $a \geq 0$ and $b \geq -a/n$, and for it to satisfy property~(d) we need $b \leq a$.
	
	On the other hand, we now show that $(X,Y)$ \emph{is} PCP when these inequalities are satisfied (i.e., when $a \geq 0$ and $-a/n \leq b \leq a$). To this end, assume without loss of generality that $a = n$ (otherwise we can just rescale $X$ and $Y$ so that this is the case). We now construct explicit PCP decompositions of $(X,Y)$ in the $b = -1$ and $b = n$ cases, and pairwise complete positivity on the entire interval $-1 \leq b \leq n$ then follows from convexity.
	
	For the $b = -1$ case, we first define the quantities
	\begin{align*}
	    c_{\pm} = \sqrt{\frac{1}{2}\Big(n^2 - n + 2 \pm \sqrt{n^4 - 2n^3 + n^2 + 4n} \Big)}.
	\end{align*}
	It is straightforward to check that these quantities are real for all $n \geq 1$ (i.e., the quantities under square roots are non-negative) and that $c_{+}c_{-} = n-1$ and $c_{+}^2 + c_{-}^2 = n^2 - n + 2$. If we define the vectors
	\begin{align*}
	    \mathbf{v_k} & = \frac{1}{\sqrt{n}}\big((c_{+}-1)\mathbf{e}_k + \mathbf{1}\big) \quad \text{and} \quad \mathbf{w_k} = (c_{-}+1)\mathbf{e}_k - \mathbf{1} \quad \text{for all} \quad 1 \leq k \leq n
	\end{align*}
	then it is straightforward to check that $X$ and $Y$ have the CCP decomposition
	\[
	    X = \sum_{k=1}^n\mathbf{(v_k\odot w_k)}\mathbf{(v_k\odot w_k)}^* \qquad \text{and} \qquad Y = \sum_{k=1}^n \mathbf{(v_k\odot \overline{v_k})}\mathbf{(w_k\odot\overline{w_k})}^*.
	\]
	
	In the $b = n$ case, we note that $X = Y$ so by Theorem~\ref{thm:cp_is_ccp} it suffices to show that $X$ is completely positive. Indeed, it has CP decomposition
    \[
        X = \sum_{k=1}^n \mathbf{v}_{\mathbf{k}}\mathbf{v}_{\mathbf{k}}^* \quad \text{with} \quad \mathbf{v}_{\mathbf{k}} = \sqrt{\frac{n}{n+2+2\sqrt{n+1}}} \big(\mathbf{1} + (1+\sqrt{n+1})\mathbf{e}_{\mathbf{k}}\big) \quad \text{for all} \quad 1 \leq k \leq n.
    \]
\end{example}

It is also worth noting a few ways in which we can modify or combine PCP matrix pairs without breaking the PCP property. In particular, if $D$ is an entrywise non-negative diagonal matrix and $P$ is a permutation matrix then:\smallskip
\begin{itemize}
    \item If $(X,Y)$ is PCP then so are $(DXD^*,DYD^*)$ and $(PXP^*,PYP^*)$.
    
    \item If $(X_1,Y_1)$ and $(X_2,Y_2)$ are PCP then so is $(X_1+X_2,Y_1+Y_2)$.\smallskip
\end{itemize}
The above properties all follow immediately from looking at how the given transformation affects the PCP decomposition of the given pair. There are also a few special cases where it is straightforward to check that a pair $(X,Y)$ is pairwise completely positive, and we catalog some of these simple cases here for easy reference.

\begin{lemma}\label{lem:ccp_diagonal}
	Suppose $X,Y \in M_n(\mathbb{C})$ satisfy conditions~(a)--(c) of Theorem~\ref{thm:ccp_properties}. If $X$ is diagonal then $(X,Y)$ is pairwise completely positive.
\end{lemma}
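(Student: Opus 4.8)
The plan is to give an explicit PCP decomposition of $(X,Y)$, exploiting the diagonality of $X$ to cleanly separate the diagonal contribution (shared by both matrices) from the off-diagonal contribution (which lives only in $Y$). Write $d_i := x_{i,i}$. Since $X$ is diagonal and positive semidefinite by condition~(a), each $d_i \geq 0$, and condition~(c) gives $y_{i,i} = d_i$ as well; moreover condition~(b) guarantees $y_{i,j} \geq 0$ for all $i,j$. The construction uses two families of rank-one terms of the form appearing in~\eqref{eq:ccp_defn_sum}: ``diagonal'' terms that reproduce the common diagonal of $X$ and $Y$, and ``off-diagonal'' terms that fill in the entries $y_{i,j}$ with $i \neq j$ while contributing nothing to $X$.

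First I would handle the diagonal. For each $i$, take $\mathbf{v} = \mathbf{w} = d_i^{1/4}\mathbf{e}_i$. Then $\mathbf{v}\odot\mathbf{w} = \sqrt{d_i}\,\mathbf{e}_i$, so this term contributes $d_i\,\mathbf{e}_i\mathbf{e}_i^*$ to the $X$-side; likewise $\mathbf{v}\odot\overline{\mathbf{v}} = \mathbf{w}\odot\overline{\mathbf{w}} = \sqrt{d_i}\,\mathbf{e}_i$ contributes $d_i\,\mathbf{e}_i\mathbf{e}_i^*$ to the $Y$-side. Summing over $i$ reproduces exactly the diagonal of both matrices. Next I would fill in the strictly off-diagonal part of $Y$. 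The key observation is that if $\mathbf{v}$ and $\mathbf{w}$ are supported on \emph{distinct} coordinates $i \neq j$, then $\mathbf{v}\odot\mathbf{w} = \mathbf{0}$, so such a term contributes nothing to $X$ even though it still contributes to $Y$. Concretely, for each ordered pair $i \neq j$, take $\mathbf{v} = y_{i,j}^{1/4}\mathbf{e}_i$ and $\mathbf{w} = y_{i,j}^{1/4}\mathbf{e}_j$, which are well-defined real vectors precisely because $y_{i,j} \geq 0$. Then $\mathbf{v}\odot\overline{\mathbf{v}} = \sqrt{y_{i,j}}\,\mathbf{e}_i$ and $\mathbf{w}\odot\overline{\mathbf{w}} = \sqrt{y_{i,j}}\,\mathbf{e}_j$, so this term contributes $y_{i,j}\,\mathbf{e}_i\mathbf{e}_j^*$ to $Y$ and nothing to $X$.

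Collecting all $n + n(n-1) = n^2$ terms, the $X$-side sums to $\sum_i d_i\,\mathbf{e}_i\mathbf{e}_i^* = X$ (using that $X$ is diagonal), while the $Y$-side sums to $\sum_i d_i\,\mathbf{e}_i\mathbf{e}_i^* + \sum_{i \neq j} y_{i,j}\,\mathbf{e}_i\mathbf{e}_j^* = Y$ (using $y_{i,i} = d_i$ from condition~(c)). This matches the defining form~\eqref{eq:ccp_defn_sum}, so $(X,Y)$ is PCP. Since the argument is an explicit construction, there is no genuine obstacle; the only point requiring care is verifying that the off-diagonal terms leave $X$ untouched, which is exactly where the vanishing Hadamard product $\mathbf{e}_i\odot\mathbf{e}_j = \mathbf{0}$ (for $i \neq j$) together with the diagonality of $X$ does the work. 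It is worth remarking that conditions~(d) and~(e) play no role here---they are automatically satisfied once $X$ is diagonal---so the hypotheses~(a)--(c) are already enough.
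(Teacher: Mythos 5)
Your proof is correct and takes essentially the same approach as the paper: both use the $n^2$ rank-one terms indexed by ordered pairs $(i,j)$, with one vector supported on $\mathbf{e}_i$ and the other on $\mathbf{e}_j$, so that the terms with $i \neq j$ have vanishing Hadamard product and contribute only to $Y$. The only (immaterial) difference is that you split the scalar as $y_{i,j}^{1/4}$ on each vector, whereas the paper places $\sqrt{y_{i,j}}$ entirely on $\mathbf{w}$.
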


\begin{proof}
    Define families of vectors $\{\mathbf{v_{i,j}}\}$ and $\{\mathbf{w_{i,j}}\}$ by $\mathbf{v_{i,j}} = \mathbf{e_i}$ and $\mathbf{w_{i,j}} = \sqrt{y_{i,j}}\mathbf{e_j}$ for all $1 \leq i,j, \leq n$. It is then straightforward to verify that
    \begin{align*}
        \sum_{i,j=1}^n \mathbf{(v_{i,j}\odot w_{i,j})}\mathbf{(v_{i,j}\odot w_{i,j})}^* & = \sum_{j=1}^n (\sqrt{y_{j,j}}\mathbf{e_j})(\sqrt{y_{j,j}}\mathbf{e_j})^* = X \quad \text{and} \\
        \sum_{i,j=1}^n (\mathbf{v_{i,j}}\odot \overline{\mathbf{v_{i,j}}})(\mathbf{w_{i,j}}\odot \overline{\mathbf{w_{i,j}}})^* & = \sum_{i,j=1}^n \mathbf{e_i}(y_{i,j}\mathbf{e_j})^* = Y,
    \end{align*}
    which is a PCP decomposition of $(X,Y)$.
\end{proof}

The above lemma, as well as properties~(d) and~(e) of Theorem~\ref{thm:ccp_properties}, all roughly say that if $(X,Y)$ is PCP then the off-diagonal portion of $Y$ should be larger than that of $X$. The following lemma provides yet another way of making this idea precise, and acts as a sort of converse to Lemma~\ref{lem:ccp_diagonal}.

\begin{lemma}\label{lem:ccp_diagonal_b}
	Suppose $X,Y \in M_n(\mathbb{C})$ satisfy conditions~(a)--(c) of Theorem~\ref{thm:ccp_properties} and furthermore that $Y$ is diagonal. Then $(X,Y)$ is pairwise completely positive if and only if $X$ is diagonal.
\end{lemma}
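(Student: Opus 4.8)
The plan is to dispatch the two directions of the biconditional separately, in each case reducing to a result already established in the excerpt. The forward (``if'') direction is essentially immediate: if $X$ is diagonal, then since $X$ and $Y$ are assumed to satisfy conditions~(a)--(c) of Theorem~\ref{thm:ccp_properties}, the pair $(X,Y)$ falls exactly under the hypotheses of Lemma~\ref{lem:ccp_diagonal}, which then guarantees that $(X,Y)$ is pairwise completely positive. So this half requires nothing beyond citing the earlier lemma.

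For the reverse (``only if'') direction, I would use the necessary condition in property~(d) of Theorem~\ref{thm:ccp_properties}, which says that if $(X,Y)$ is PCP then $|x_{i,j}|^2 \leq y_{i,j}y_{j,i}$ for all $1 \leq i,j \leq n$. The key observation is that diagonality of $Y$ means $y_{i,j} = 0$ whenever $i \neq j$, so the product $y_{i,j}y_{j,i}$ vanishes for every off-diagonal pair. Plugging this into property~(d) forces $|x_{i,j}|^2 \leq 0$, hence $x_{i,j} = 0$, for all $i \neq j$. Therefore $X$ is diagonal, completing the proof.

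There is no substantive obstacle here; the statement is a quick corollary of the machinery already in place. If anything, the only thing worth flagging is that the argument relies on $(X,Y)$ being genuinely PCP (so that property~(d) applies) rather than merely satisfying conditions~(a)--(c)---it is precisely property~(d), a \emph{consequence} of PCP-ness, that does the work in the ``only if'' direction, whereas conditions~(a)--(c) together with diagonality of $X$ are what feed Lemma~\ref{lem:ccp_diagonal} in the ``if'' direction.
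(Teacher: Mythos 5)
Your proof is correct and follows exactly the paper's own argument: the ``if'' direction by invoking Lemma~\ref{lem:ccp_diagonal}, and the ``only if'' direction by applying Theorem~\ref{thm:ccp_properties}(d) with $y_{i,j}y_{j,i}=0$ off the diagonal to force $x_{i,j}=0$. Nothing is missing.
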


\begin{proof}
    The ``if'' direction of this result follows immediately from Lemma~\ref{lem:ccp_diagonal}. For the ``only if'' direction, suppose $(X,Y)$ is PCP. Then Theorem~\ref{thm:ccp_properties}(d) tells us that $|x_{i,j}|^2 \leq y_{i,j}y_{j,i}$ for all $1 \leq i,j \leq n$. Since $Y$ is diagonal, we know that $y_{i,j} = 0$ whenever $i \neq j$, so $x_{i,j} = 0$ when $i \neq j$ as well (i.e., $X$ is also diagonal).
\end{proof}

As yet another way of making this idea precise that the off-diagonal entries of $Y$ in any PCP pair should be ``large'' compared to the off-diagonal entries of $X$, we note that increasing the off-diagonal entries of $Y$ can never break the PCP property, nor can increasing the diagonal entries of $X$ and $Y$ by the same amount:

\begin{lemma}\label{lem:ccp_increate_y_offdiag}
	Suppose $X,Y,P \in M_n(\mathbb{C})$ are such that $(X,Y)$ is pairwise completely positive and $P$ is entrywise non-negative. Then $(X+\operatorname{diag}(P),Y+P)$ is also pairwise completely positive.
\end{lemma}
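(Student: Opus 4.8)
The plan is to reduce the claim to two facts already available in the excerpt: the additivity property of PCP pairs (if $(X_1,Y_1)$ and $(X_2,Y_2)$ are PCP then so is $(X_1+X_2,Y_1+Y_2)$) and Lemma~\ref{lem:ccp_diagonal} (any pair satisfying conditions (a)--(c) of Theorem~\ref{thm:ccp_properties} whose first coordinate is diagonal is PCP). The key observation is that the target pair decomposes as a coordinatewise sum, $(X+\operatorname{diag}(P),\,Y+P) = (X,Y) + (\operatorname{diag}(P),\,P)$. Since $(X,Y)$ is PCP by hypothesis, the additivity property reduces everything to showing that the auxiliary pair $(\operatorname{diag}(P),\,P)$ is itself PCP.

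To establish that $(\operatorname{diag}(P),\,P)$ is PCP via Lemma~\ref{lem:ccp_diagonal}, I would check conditions (a)--(c) of Theorem~\ref{thm:ccp_properties} for this pair. For (a), the matrix $\operatorname{diag}(P)$ is diagonal with diagonal entries equal to those of $P$; as $P$ is entrywise non-negative, these entries are non-negative, so $\operatorname{diag}(P)$ is Hermitian positive semidefinite. Condition (b) is immediate, since $P$ is real and entrywise non-negative by hypothesis. Condition (c) holds by construction: $\operatorname{diag}(P)$ and $P$ agree exactly on their diagonals. Because $\operatorname{diag}(P)$ is diagonal, Lemma~\ref{lem:ccp_diagonal} then applies and yields that $(\operatorname{diag}(P),\,P)$ is pairwise completely positive.

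Combining these, $(X+\operatorname{diag}(P),\,Y+P)$ is a sum of the two PCP pairs $(X,Y)$ and $(\operatorname{diag}(P),\,P)$, hence PCP by additivity, which completes the argument. I do not expect a genuine obstacle here, as the substantive work has been front-loaded into Lemma~\ref{lem:ccp_diagonal}; the only point deserving a moment's care is confirming that the auxiliary pair meets conditions (a)--(c)---in particular, that the common diagonal of $\operatorname{diag}(P)$ and $P$ is non-negative, which is precisely where the entrywise non-negativity of $P$ is used.
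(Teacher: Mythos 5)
Your proof is correct and follows exactly the paper's own argument: write $(X+\operatorname{diag}(P),Y+P) = (X,Y) + (\operatorname{diag}(P),P)$, note that $(\operatorname{diag}(P),P)$ is PCP by Lemma~\ref{lem:ccp_diagonal}, and conclude by additivity of PCP pairs. Your explicit verification of conditions (a)--(c) for the auxiliary pair is a detail the paper leaves implicit, but it is the same proof.
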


\begin{proof}
    We just note that $(\operatorname{diag}(P),P)$ is PCP by Lemma~\ref{lem:ccp_diagonal}, and the sum of two PCP pairs is again PCP, so $(X,Y) + (\operatorname{diag}(P),P) = (X+\operatorname{diag}(P),Y+P)$ is PCP as well.
\end{proof}

As an immediately corollary, we get the following slight strengthening of Theorem~\ref{thm:cp_is_ccp}:

\begin{corollary}\label{cor:Y_bigger_than_X_CCP}
    Suppose $X,Y \in M_n(\mathbb{C})$ have the same diagonal entries and are such that $X$ is completely positive and $Y \geq X$ (where this inequality is meant entrywise). Then $(X,Y)$ is pairwise completely positive.
\end{corollary}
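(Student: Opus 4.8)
The plan is to obtain this corollary almost immediately by chaining together Theorem~\ref{thm:cp_is_ccp} and Lemma~\ref{lem:ccp_increate_y_offdiag}, using the shared-diagonal hypothesis to kill off the diagonal correction term that Lemma~\ref{lem:ccp_increate_y_offdiag} would otherwise introduce. First I would invoke Theorem~\ref{thm:cp_is_ccp}: since $X$ is completely positive, the pair $(X,X)$ is pairwise completely positive. This is the starting PCP pair whose second coordinate I then want to ``grow'' up to $Y$ without disturbing the first coordinate.

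Next I would set $P := Y - X$. The entrywise hypothesis $Y \geq X$ guarantees that $P$ is entrywise non-negative, so Lemma~\ref{lem:ccp_increate_y_offdiag} applies to the PCP pair $(X,X)$ with this choice of $P$, yielding that
\[
	\big(X + \operatorname{diag}(P),\ X + P\big)
\]
is pairwise completely positive. It then remains only to simplify the two coordinates.

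The key observation is that $X$ and $Y$ have the same diagonal entries, so the diagonal of $P = Y - X$ is identically zero; hence $\operatorname{diag}(P) = 0$ and the first coordinate above is just $X$. Meanwhile the second coordinate is $X + P = X + (Y - X) = Y$. Therefore $(X,Y)$ is pairwise completely positive, as claimed. There is no genuine obstacle here: the corollary is really just a repackaging of the two preceding results, and the only point that needs care is the remark that $\operatorname{diag}(P)$ vanishes. This is precisely where the shared-diagonal assumption is used essentially—without it, Lemma~\ref{lem:ccp_increate_y_offdiag} would force us also to raise the diagonal of $X$, landing us on $(X + \operatorname{diag}(Y-X),\,Y)$ rather than on the desired pair $(X,Y)$.
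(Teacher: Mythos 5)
Your proof is correct and follows exactly the same route as the paper: the paper's own proof is simply ``combine Lemma~\ref{lem:ccp_increate_y_offdiag} with Theorem~\ref{thm:cp_is_ccp},'' and your argument spells out precisely that combination, including the (correct) observation that the shared-diagonal hypothesis makes $\operatorname{diag}(Y-X)$ vanish so that the lemma lands on $(X,Y)$ itself.
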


\begin{proof}
    Follows immediately from combining Lemma~\ref{lem:ccp_increate_y_offdiag} with Theorem~\ref{thm:cp_is_ccp}.
\end{proof}

\section{Sufficient Conditions}\label{sec:suff_conditions}

Theorem~\ref{thm:ccp_properties} provides several necessary conditions that must be satisfied in order for a pair $(X,Y)$ to have a chance of being PCP. In this section, we instead flip this around and present some \emph{sufficient} conditions that can be used to show that a pair must be PCP. Keep in mind that the problem of determining whether or not a matrix pair is PCP is NP-hard, so we do not expect to find any computationally simple conditions that are both necessary \emph{and} sufficient.

For our first nontrivial sufficient condition, recall that there is a simple characterization of completely positive matrices in small dimensions: when $n \leq 4$ then $X$ is CP if and only if it is doubly non-negative (see \cite{BS03,MM62}, for example). An analogous result holds for PCP matrices if $n = 2$.

\begin{theorem}\label{thm:ccp_n2}
	Suppose $X,Y \in M_2(\mathbb{C})$. Then $(X,Y)$ is pairwise completely positive if and only if conditions (a)--(d) of Theorem~\ref{thm:ccp_properties} hold.
\end{theorem}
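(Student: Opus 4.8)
The ``only if'' direction requires no work: conditions (a)--(d) are necessary for any PCP pair in any dimension by Theorem~\ref{thm:ccp_properties}. So the entire content is the ``if'' direction, and the plan is to manufacture an explicit PCP decomposition out of the four scalar inequalities to which (a)--(d) collapse when $n=2$. Writing $X = \begin{bmatrix} x_{11} & x_{12} \\ \overline{x_{12}} & x_{22}\end{bmatrix}$, these say precisely that $x_{11},x_{22}\geq 0$ and $x_{11}x_{22}\geq |x_{12}|^2$ (positive semidefiniteness), that $Y$ is entrywise non-negative with $y_{11}=x_{11}$ and $y_{22}=x_{22}$, and that $|x_{12}|^2 \leq y_{12}y_{21}$.

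First I would dispose of the degenerate case. If $x_{12}=0$ then $X$ is diagonal and the claim is immediate from Lemma~\ref{lem:ccp_diagonal}; note that if either diagonal entry vanishes then positive semidefiniteness already forces $x_{12}=0$, so this also handles $x_{11}=0$ or $x_{22}=0$. Hence I may assume $x_{12}\neq 0$ and $x_{11},x_{22}>0$, and then condition (d) gives $y_{12}y_{21}\geq |x_{12}|^2>0$, so $y_{12},y_{21}>0$ as well.

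The heart of the argument is a reduction followed by one clean construction. Using Lemma~\ref{lem:ccp_increate_y_offdiag}, it suffices to exhibit a PCP pair $(X,Y')$ in which $Y'$ has the same diagonal as $X$ and off-diagonal entries $y'_{12}=\alpha$, $y'_{21}=\beta$ satisfying $\alpha\beta=|x_{12}|^2$ together with $\alpha\leq y_{12}$ and $\beta\leq y_{21}$. Condition (d) is exactly what makes the admissible window $\alpha\in[\,|x_{12}|^2/y_{21},\,y_{12}\,]$ nonempty, and for such a choice the matrix $P:=Y-Y'$ is entrywise non-negative with zero diagonal, so Lemma~\ref{lem:ccp_increate_y_offdiag} upgrades $(X,Y')$ to the desired $(X,Y)$.

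To build the decomposition of $(X,Y')$ I would split $X$ into a rank-one part plus a diagonal correction: with $\mathbf{x}=(x_{12}/\sqrt{x_{22}},\,\sqrt{x_{22}})$ one has $X=\mathbf{x}\mathbf{x}^*+d\,\mathbf{e_1}\mathbf{e_1}^*$ where $d=x_{11}-|x_{12}|^2/x_{22}\geq 0$. The diagonal correction is absorbed by a trivial term $\mathbf{v}=\mathbf{e_1}$, $\mathbf{w}=\sqrt{d}\,\mathbf{e_1}$, which touches only the $(1,1)$ entries of both matrices. For the rank-one part I take a single pair $(\mathbf{v_1},\mathbf{w_1})$ with $\mathbf{v_1}\odot\mathbf{w_1}=\mathbf{x}$; the decisive observation is that the product of its two off-diagonal contributions to $Y'$ is $|v_{1,1}|^2|w_{1,2}|^2\cdot|v_{1,2}|^2|w_{1,1}|^2=|v_{1,1}w_{1,1}|^2|v_{1,2}w_{1,2}|^2=|x_{12}|^2$ automatically, regardless of how the magnitudes are apportioned. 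The two remaining magnitude ratios (one per coordinate) then provide exactly enough freedom to set $\alpha/\beta$ to any prescribed positive value while the product stays pinned at $|x_{12}|^2$, and the phases are free to realize the complex entries of $\mathbf{x}$ without disturbing $Y'$. The main obstacle is finding this self-calibrating rank-one building block; once it is in hand, matching $(\alpha,\beta)$ to the window supplied by (d) and checking the entries is routine.
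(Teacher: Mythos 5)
Your proof is correct and follows essentially the same route as the paper's: the paper also constructs an explicit two-term decomposition in which a rank-one block $\mathbf{v_1}\odot\mathbf{w_1}$ realizes the off-diagonal part of $X$ (so that its contribution to $Y$ automatically has off-diagonal product $|x_{1,2}|^2$), with positive semidefiniteness supplying $x_{2,2}-|x_{2,1}|^2/x_{1,1}\geq 0$ and condition (d) supplying $y_{2,1}-|x_{2,1}|^2/y_{1,2}\geq 0$, after the same dispatch of the degenerate diagonal cases via Lemma~\ref{lem:ccp_diagonal}. The only difference is packaging: the paper pins your free parameter at the endpoint $\alpha=y_{1,2}$ and folds both the Schur-complement diagonal correction and the off-diagonal slack into a single second term $(\mathbf{v_2},\mathbf{w_2})$, whereas you route that slack through Lemma~\ref{lem:ccp_increate_y_offdiag}.
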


\begin{proof}
    Theorem~\ref{thm:ccp_properties} already establishes the ``only if'' direction, so we only need to prove the ``if'' direction. With this in mind, suppose properties (a)--(d) of Theorem~\ref{thm:ccp_properties} hold and write
    \[
        X= \begin{bmatrix}
            x_{1,1} & \overline{x_{2,1}} \\
            x_{2,1} & x_{2,2}
        \end{bmatrix}, \quad Y = \begin{bmatrix}
            x_{1,1} & y_{1,2} \\
            y_{2,1} & x_{2,2}
        \end{bmatrix}.
    \]
    If $x_{1,1} = 0$ then $x_{1,2} = x_{2,1} = 0$ by positive semidefiniteness of $X$, so $X$ is diagonal and thus Lemma~\ref{lem:ccp_diagonal} shows that $(X,Y)$ is PCP. Similarly, if $y_{1,2} = 0$ then property~(d) of Theorem~\ref{thm:ccp_properties} tells us that $x_{1,2} = x_{2,1} = 0$, so again $X$ is diagonal and $(X,Y)$ is PCP. We thus assume from now on that $x_{1,1} \neq 0$ and $y_{1,2} \neq 0$, and our goal is to show that there exist families of vectors $\{ \mathbf{v_j}\},\{\mathbf{w_j}\} \subseteq \mathbb{C}^{2} $ such that
    \begin{align*}
        X = \sum_j(\mathbf{v_{j}\odot w_{j}})(\mathbf{v_{j}\odot w_{j}})^{*} \quad \text{and} \quad Y = \sum_j(\mathbf{v_{j}\odot \overline{v_{j}}})(\mathbf{w_{j}\odot \overline{w_{j}}})^{*}.
    \end{align*}
    To this end, define vectors $\mathbf{v_1}, \mathbf{v_2},\mathbf{w_1},\mathbf{w_2}$ in terms of the entries of $X$ and $Y$ as follows:
    \begin{align*}
        \mathbf{v_1} = \begin{bmatrix} 1 \\ \frac{x_{2,1}}{\sqrt{x_{1,1}y_{1,2}}}\end{bmatrix}, \quad \mathbf{v_2} = \begin{bmatrix} 0 \\ 1 \end{bmatrix}, \quad \mathbf{w_1} = \begin{bmatrix} \sqrt{x_{1,1}} \\ \sqrt{y_{1,2}}\end{bmatrix}, \quad \text{and} \quad \mathbf{w_2} = \begin{bmatrix}\sqrt{y_{2,1}-\frac{|x_{2,1}|^2}{y_{1,2}}} \\ \sqrt{x_{2,2}-\frac{|x_{2,1}|^2}{x_{1,1}}}\\
        \end{bmatrix},
    \end{align*}
    where we note that positive semidefiniteness of $X$ ensures that $x_{2,2}-\frac{|x_{2,1}|^2}{x_{1,1}} \geq 0$, and property~(d) of Theorem~\ref{thm:ccp_properties} ensures that $y_{2,1}-\frac{|x_{2,1}|^2}{y_{1,2}} \geq 0$.
    
    Then direct computation shows that
    \begin{align*}
        \sum_{j=1}^2(\mathbf{v_{j}\odot w_{j}})(\mathbf{v_{j}\odot w_{j}})^{*} & = \begin{bmatrix}
            x_{1,1} & \overline{x_{2,1}} \\
            x_{2,1} & |x_{2,1}|^2\over{x_{1,1}}
        \end{bmatrix} 
       + \begin{bmatrix}
            0 & 0 \\
            0 & x_{2,2}-\frac{|x_{2,1}|^2}{x_{1,1}}
        \end{bmatrix}=X, \quad \text{and} \\
     \noindent   
     \sum_{j=1}^2(\mathbf{v_{j}\odot \overline{v_{j}}})(\mathbf{w_{j}\odot \overline{w_{j}}})^{*} & =   
       \begin{bmatrix}
            x_{1,1} & y_{1,2}\\
            {|x_{2,1}|^2}\over{y_{1,2}} & |x_{2,1}|^2\over{x_{1,1}}
        \end{bmatrix} 
       + \begin{bmatrix}
            0 & 0 \\
            y_{2,1}-\frac{|x_{2,1}|^2}{y_{1,2}} & x_{2,2}-\frac{|x_{2,1}|^2}{x_{1,1}}
        \end{bmatrix}=Y,
\end{align*}
\noindent
as desired.
\end{proof}

On the other hand, Example~\ref{exam:3dim_eg} shows that the above characterization of PCP matrices does not hold for matrices $X,Y \in M_3(\mathbb{C})$. We are not able to get a complete necessary and sufficient condition in dimension $3$ or larger, but we can prove the following sufficient condition that reduces to that of the above theorem when $n = 2$.

\begin{theorem}\label{thm:ccp_sufficient_large}
	Suppose $X,Y \in M_n(\mathbb{C})$ satisfy properties (a)--(c) of Theorem~\ref{thm:ccp_properties}. For each $k = 1, 2, \ldots, n$ define the vectors $\mathbf{v_k}, \mathbf{w_k} \in \mathbb{C}^n$ recursively via the following formulas for their entries:
    \begin{align*}
	    v_{k,j} & = \begin{cases}
	        0 & \text{if} \ \ 1 \leq j < k \\
	        \displaystyle\frac{x_{j,k} - \sum_{i=1}^{k-1}v_{i,j}w_{i,j}\overline{v_{i,k}w_{i,k}}}{\sqrt{y_{k,j}- \sum_{i=1}^{k-1}|v_{i,k}|^2|w_{i,j}|^2 }} & \text{if} \ \ k \leq j \leq n
	    \end{cases} \qquad \text{and} \qquad w_{k,j} = \frac{\sqrt{y_{k,j} - \sum_{i=1}^{k-1}|v_{i,k}|^2|w_{i,j}|^2}}{v_{k,k}}.
	\end{align*}
	As long as all of the quantities under square roots above are non-negative and quantities in denominators are non-zero, the pair $(X,Y)$ is pairwise completely positive via the decomposition
	\[
	    X = \sum_{k=1}^n \mathbf{(v_k\odot w_k)}\mathbf{(v_k\odot w_k)}^* \qquad \text{and} \qquad Y = \sum_{k=1}^n \mathbf{(v_k\odot \overline{v_k})}\mathbf{(w_k\odot\overline{w_k})}^*.
	\]
\end{theorem}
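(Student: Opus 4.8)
The plan is to run a Cholesky-type induction on $k$, peeling off one rank-one term at a time and tracking the two residual matrices $R^{(k)} := X - \sum_{l=1}^{k}(\mathbf{v_l}\odot\mathbf{w_l})(\mathbf{v_l}\odot\mathbf{w_l})^*$ and $S^{(k)} := Y - \sum_{l=1}^{k}(\mathbf{v_l}\odot\overline{\mathbf{v_l}})(\mathbf{w_l}\odot\overline{\mathbf{w_l}})^*$, with $R^{(0)}=X$ and $S^{(0)}=Y$, so that the goal becomes showing $R^{(n)}=0$ and $S^{(n)}=0$. The first thing I would do is rewrite the recursive formulas in terms of these residuals: a direct inspection shows that the quantity under every square root is exactly $S^{(k-1)}_{k,j}=y_{k,j}-\sum_{i<k}|v_{i,k}|^2|w_{i,j}|^2$, while the numerator defining $v_{k,j}$ (for $j\ge k$) is exactly $R^{(k-1)}_{j,k}=x_{j,k}-\sum_{i<k}v_{i,j}w_{i,j}\overline{v_{i,k}w_{i,k}}$. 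Thus $v_{k,j}=R^{(k-1)}_{j,k}/\sqrt{S^{(k-1)}_{k,j}}$ and $w_{k,j}=\sqrt{S^{(k-1)}_{k,j}}/v_{k,k}$, and the non-negativity and non-vanishing hypotheses are precisely what make these well-defined.

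The key preliminary observation, and the step I expect to matter most, is that condition~(c) forces the diagonal residuals of $X$ and $Y$ to coincide at every stage: since the $(i,i)$ entries of $R$ and $S$ receive the identical update $-|v_{k,i}|^2|w_{k,i}|^2$ at step $k$, and agree at $k=0$ because $x_{i,i}=y_{i,i}$, a trivial induction gives $R^{(k)}_{i,i}=S^{(k)}_{i,i}$ for all $k,i$. Feeding $j=k$ into the formula for $v_{k,j}$ and using $R^{(k-1)}_{k,k}=S^{(k-1)}_{k,k}$ then yields the clean identities $v_{k,k}=\sqrt{S^{(k-1)}_{k,k}}$ and $w_{k,k}=1$, which are exactly what is needed to make the two telescoping cancellations work.

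The main induction then proves two statements simultaneously: $H(k)$, that $R^{(k)}_{i,j}=0$ whenever $\min(i,j)\le k$, and $H'(k)$, that $S^{(k)}_{i,j}=0$ whenever $i\le k$. For $H'(k)$ the only new row is row $k$, where $|v_{k,k}|^2|w_{k,j}|^2 = S^{(k-1)}_{k,k}\cdot S^{(k-1)}_{k,j}/S^{(k-1)}_{k,k}=S^{(k-1)}_{k,j}$ cancels $S^{(k-1)}_{k,j}$ exactly, giving $S^{(k)}_{k,j}=0$ for every $j$; the earlier rows $i<k$ are untouched because $v_{k,i}=0$. For $H(k)$ I would first handle column $k$: for $j\ge k$ the $(j,k)$ entry of the new summand is $v_{k,j}w_{k,j}\overline{v_{k,k}w_{k,k}} = (R^{(k-1)}_{j,k}/\sqrt{S^{(k-1)}_{k,k}})\cdot\sqrt{S^{(k-1)}_{k,k}} = R^{(k-1)}_{j,k}$, which cancels $R^{(k-1)}_{j,k}$, while entries with $j<k$ are already zero by $H(k-1)$ and are untouched since $v_{k,j}=0$. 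Because $X$ and each rank-one summand are Hermitian, $R^{(k)}$ is Hermitian, so vanishing of column $k$ gives vanishing of row $k$ for free, completing $H(k)$. Evaluating at $k=n$ gives $R^{(n)}=S^{(n)}=0$, which is the claimed decomposition.

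The computations are all elementary once this residual reformulation is in place; the only real subtlety, and the place where condition~(c) is indispensable, is the identity $v_{k,k}=\sqrt{S^{(k-1)}_{k,k}}$ that drives both cancellations. I would also state explicitly that the standing hypotheses (non-negativity under the roots, non-zero denominators) are used only to guarantee the construction is well-defined, since the algebraic cancellation identities above hold formally in any case.
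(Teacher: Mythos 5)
Your proposal is correct and takes essentially the same route as the paper: your residuals $R^{(k-1)}_{j,k}$ and $S^{(k-1)}_{k,j}$ are exactly the paper's quantities $x_{j,k}-c_{j,k}$ and $y_{k,j}-d_{k,j}$, and your telescoping induction is a repackaging of the paper's direct entrywise evaluation of the two sums, which hinges on the same three facts you isolate (the zero pattern of the $\mathbf{v_k}$, the cancellation coming from the formula for $w_{k,j}$, and condition (c) in the guise $x_{k,k}=y_{k,k}$ together with $c_{k,k}=d_{k,k}$, i.e.\ your diagonal-residual identity). The only substantive difference is cosmetic but welcome: you invoke Hermiticity of the residual to dispatch the entries with $j<k$, a case the paper's computation (which implicitly assumes $j\geq k$ when substituting the formula for $v_{k,j}$) leaves to the reader.
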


\begin{proof}
    We just need to show that that claimed decompositions do indeed give us $X$ and $Y$, which just boils down to some somewhat tedious computation. For simplicity of notation, we start by defining the quantities
    \begin{align*}
        c_{j,k} = \sum_{i=1}^{k-1}v_{i,j}w_{i,j}\overline{v_{i,k}w_{i,k}} \quad \text{and} \quad d_{k,j} = \sum_{i=1}^{k-1}|v_{i,k}|^2|w_{i,j}|^2,
    \end{align*}
    so what we can express $x_{k,j}$ and $y_{k,j}$ more simply as
    \begin{align}\label{eq:vk_wk_compact}
         v_{k,j} & = \begin{cases}
    	        0 & \text{if} \ \ 1 \leq j < k \\
    	        \displaystyle\frac{x_{j,k} - c_{j,k}}{\sqrt{y_{k,j} - d_{k,j}}} & \text{if} \ \ k \leq j \leq n \qquad \text{and}
    	   \end{cases} \qquad w_{k,j} = \frac{\sqrt{y_{k,j} - d_{k,j}}}{v_{k,k}}.
    \end{align}
	
    Let's start by computing the $(j,k)$-entry of the claimed decomposition of $Y$:
    \begin{align*}
        \left[\sum_{i=1}^n \mathbf{(v_i\odot \overline{v_i})}\mathbf{(w_i\odot\overline{w_i})}^*\right]_{j,k} & = \sum_{i=1}^n |v_{i,j}|^2|w_{i,k}|^2 = \sum_{i=1}^j |v_{i,j}|^2|w_{i,k}|^2 = \left(\sum_{i=1}^{j-1} |v_{i,j}|^2|w_{i,k}|^2\right) + |v_{j,j}|^2|w_{j,k}|^2 \\
        & = d_{j,k} + |v_{j,j}|^2|w_{j,k}|^2 = d_{j,k} + |v_{j,j}|^2\left(\frac{y_{j,k} - d_{j,k}}{|v_{j,j}|^2}\right) = y_{j,k},
    \end{align*}
    where the second equality follows from the fact $v_{i,j} = 0$ whenever $i > j$ and the second-to-last equality follows from plugging in the formula~\eqref{eq:vk_wk_compact} for $w_{j,k}$.
    
    Thus this is indeed a valid PCP decomposition of $Y$. To see that it similarly gives the correct $X$, we similarly compute the $(j,k)$-entry of the claimed decomposition of $X$:
    \begin{align}\begin{split}\label{eq:start_of_X_suff_decomp}
        \left[\sum_{i=1}^n \mathbf{(v_i\odot w_i)}\mathbf{(v_i\odot w_i)}^*\right]_{j,k} & = \sum_{i=1}^n v_{i,j}w_{i,j}\overline{v_{i,k}w_{i,k}} = \sum_{i=1}^{k} v_{i,j}w_{i,j}\overline{v_{i,k}w_{i,k}} \\
        & = \left(\sum_{i=1}^{k-1} v_{i,j}w_{i,j}\overline{v_{i,k}w_{i,k}}\right) +  v_{k,j}w_{k,j}\overline{v_{k,k}w_{k,k}} = c_{j,k} +  v_{k,j}w_{k,j}\overline{v_{k,k}w_{k,k}},\\
    \end{split}\end{align}
    where the second equality again follows from the fact that $v_{i,k} = 0$ whenever $i > k$. Let's now compute the final term above (i.e., the term that we pulled out of the sum) by plugging in the formulas~\eqref{eq:vk_wk_compact} for $v_{k,j}$, $w_{k,j}$, and $w_{k,k}$:
    \begin{align*}
    	v_{k,j}w_{k,j}\overline{v_{k,k}w_{k,k}} & = \frac{x_{j,k} - c_{j,k}}{\sqrt{y_{k,j} - d_{k,j}}}\left(\frac{\sqrt{y_{k,j} - d_{k,j}}}{{v_{k,k}}}\right)\big(\overline{v_{k,k}}\big)\left(\frac{\sqrt{y_{k,k} - d_{k,k}}}{\overline{v_{k,k}}}\right) = \frac{\sqrt{y_{k,k} - d_{k,k}}}{v_{k,k}}\big(x_{j,k} - c_{j,k}\big),
    \end{align*}
    where the second equality just follows from cancelling terms where possible. If we now plug in the formula~\eqref{eq:vk_wk_compact} for $v_{k,k}$ then we see that this quantity equals
    \begin{align*}
    	v_{k,j}w_{k,j}\overline{v_{k,k}w_{k,k}} & = (y_{k,k} - d_{k,k})\left(\frac{x_{j,k} - c_{j,k}}{x_{k,k} - c_{k,k}}\right) = x_{j,k} - c_{j,k},
    \end{align*}
    where the final equality follows from the facts that $y_{k,k} = x_{k,k}$ and $d_{k,k} = c_{k,k}$. Finally, plugging this equation into Equation~\eqref{eq:start_of_X_suff_decomp} gives
    \[
        \left[\sum_{i=1}^n \mathbf{(v_i\odot w_i)}\mathbf{(v_i\odot w_i)}^*\right]_{j,k} = c_{j,k} + v_{k,j}w_{k,j}\overline{v_{k,k}w_{k,k}} = c_{j,k} + (x_{j,k} - c_{j,k}) = x_{j,k}.
    \]
    It follows that this this PCP decomposition gives the correct $X$ matrix as well, completing the proof.%
\end{proof}

We note that the decomposition of the above theorem can be thought of as constructing $X$ and $Y$ one row at a time. The choice of $\mathbf{v_1}$ and $\mathbf{w_1}$ give the correct first row of $X$ and $Y$, then $\mathbf{v_2}$ and $\mathbf{w_2}$ correct the second row of $X$ and $Y$ (without affecting the entries in their first row), and so on. For this reason, there are cases where the above theorem is not able to directly find a PCP decomposition of a pair $(X,Y)$, but it is able to find a PCP decomposition for $(PXP^*,PYP^*)$ for some permutation matrix $P$.

\begin{example}\label{exam:weird_qutrit}
    Consider the pair of matrices
    \[
        X = \begin{bmatrix}
            2 & 1 & -1 \\
            1 & 8 & 1 \\
            -1 & 1 & 4
        \end{bmatrix} \quad \text{and} \quad Y = \begin{bmatrix}
            2 & 1 & 3 \\
            2 & 8 & 1 \\
            1 & 2 & 4
        \end{bmatrix}.
    \]
    
    If we try to construct a PCP decomposition of $(X,Y)$ via Theorem~\ref{thm:ccp_sufficient_large}, we get stuck when computing $v_{2,3} = 3/\sqrt{-2}$, since it contains a negative number underneath a square root so the theorem does not apply. To get around this problem, we can conjugate $X$ and $Y$ by the permutation matrix
    \[
        P = \begin{bmatrix}
            0 & 1 & 0 \\
            0 & 0 & 1 \\
            1 & 0 & 0
        \end{bmatrix}.
    \]
    Then the theorem gives us the following PCP decomposition of $(PXP^*,PYP^*)$:
    \begin{align*}
        \mathbf{v_1} & = (\sqrt{8}, 1, 1/\sqrt{2})^T, \quad & \mathbf{w_1} & = (1, 1/\sqrt{8}, 1/2)^T, \\
        \mathbf{v_2} & = (0,\sqrt{31/8},-3\sqrt{3}/4)^T, \quad & \mathbf{w_2} & = (\sqrt{8/31},1,1/\sqrt{6/31})^T, \\
        \mathbf{v_3} & = (0,0,4\sqrt{3/31})^T, \quad & \mathbf{w_3} & = (1/(2\sqrt{6}),\sqrt{155/192},1)^T.
    \end{align*}
    Simply permuting the entries of these vectors according to $P^{-1}$ then gives us the following PCP decomposition of $(X,Y)$:
    \begin{align*}
        \mathbf{v_1} & = (1/\sqrt{2}, \sqrt{8}, 1)^T, \quad & \mathbf{w_1} & = (1/2, 1, 1/\sqrt{8})^T, \\
        \mathbf{v_2} & = (-3\sqrt{3}/4,0,\sqrt{31/8})^T, \quad & \mathbf{w_2} & = (1/\sqrt{6/31},\sqrt{8/31},1)^T, \\
        \mathbf{v_3} & = (4\sqrt{3/31},0,0)^T, \quad & \mathbf{w_3} & = (1,1/(2\sqrt{6}),\sqrt{155/192})^T.
    \end{align*}
\end{example}

One more sufficient condition for showing that matrices are PCP can be motivated by recalling that every entrywise non-negative diagonally-dominant matrix (i.e., every matrix $X \in M_n(\mathbb{R})$ satisfying $x_{j,j} \geq \sum_{i \neq j} x_{i,j}$ for all $j$ and $x_{i,j} \geq 0$ for all $i,j$) is completely positive \cite{Kay87}. As a generalization of this fact, it was shown in \cite{DJL94} that if $X$ is doubly non-negative and its \emph{comparison matrix} $M(X)$ is positive semidefinite, then $X$ is completely positive, where the comparison matrix $M(X)$ is constructed by replacing the off-diagonal entries of $X$ by the negative of their absolute value:
\[
    M(X) \defeq \begin{bmatrix}
        |x_{1,1}| & -|x_{1,2}| & \cdots & -|x_{1,n}| \\
        -|x_{2,1}| & |x_{2,2}| & \cdots & -|x_{2,n}| \\
        \vdots & \vdots & \ddots & \vdots \\
        -|x_{n,1}| & -|x_{n,2}| & \cdots & |x_{n,n}|
    \end{bmatrix}.
\]
The following theorem establishes the natural generalization of these facts to pairwise completely positive matrices.

\begin{theorem}\label{thm:m_matrix_ccp}
    Suppose $X,Y \in M_n(\mathbb{C})$ satisfy conditions (a)--(d) of Theorem~\ref{thm:ccp_properties}. If $M(X)$ is positive semidefinite then $(X,Y)$ is pairwise completely positive.
\end{theorem}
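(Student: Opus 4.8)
The plan is to reduce the statement to the completely positive result of \cite{DJL94} that it generalizes, and then to lift a completely positive decomposition of the entrywise-modulus matrix $|X|$ to a pairwise completely positive decomposition of $(X,Y)$. The first and cleanest step is to observe that $M(X)\succeq 0$ forces $|X|$ itself to be positive semidefinite. Indeed, writing $|\mathbf{z}|$ for the entrywise modulus of $\mathbf{z}\in\mathbb{C}^n$ and using that the diagonal of $X$ is real and non-negative by condition~(a), one has $\mathbf{z}^*|X|\mathbf{z}=\sum_i|z_i|^2x_{i,i}+\sum_{i<j}2\,\mathrm{Re}(\overline{z_i}z_j)|x_{i,j}|\ge\sum_i|z_i|^2x_{i,i}-\sum_{i<j}2|z_i||z_j||x_{i,j}|=|\mathbf{z}|^{\t}M(X)|\mathbf{z}|\ge 0$. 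Since $|X|$ is also entrywise non-negative and $M(|X|)=M(X)\succeq 0$, the matrix $|X|$ is doubly non-negative with positive semidefinite comparison matrix, so \cite{DJL94} yields a completely positive decomposition $|X|=\sum_k\mathbf{a_k}\mathbf{a_k}^*$ with each $\mathbf{a_k}$ entrywise non-negative. This converts the hypothesis into concrete non-negative data.

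The remaining task is to rebuild the phases of $X$ and the (possibly asymmetric) matrix $Y$ from this data. I would attack it by induction on $n$, peeling off one index at a time in a Cholesky-like fashion. Taking $\mathbf{u}$ to be the first column of $X$ divided by $\sqrt{x_{1,1}}$, the matrix $X-\mathbf{u}\mathbf{u}^*$ is supported on the indices $\ge 2$ and equals the Schur complement $\widehat{X}$ of $X$ with respect to its first index. I would write $\mathbf{u}=\mathbf{v_1}\odot\mathbf{w_1}$, placing all the phases of $\mathbf{u}$ on $\mathbf{v_1}$ (which does not affect the $Y$-side, since $Y$ depends only on the moduli of $\mathbf{v_1}$ and $\mathbf{w_1}$) and choosing the moduli $|v_{1,i}|,|w_{1,i}|$ subject to $|v_{1,i}||w_{1,i}|=|u_i|$. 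The leftover mass in the first row and column of the residual of $Y$ is non-negative by condition~(d) and can be absorbed by the zero-$X$-cost pieces $(\mathbf{e_i},c\,\mathbf{e_j})$ underlying Lemma~\ref{lem:ccp_diagonal} and Lemma~\ref{lem:ccp_increate_y_offdiag}, after which the inductive hypothesis is applied to $(\widehat{X},\widehat{Y})$, where $\widehat{Y}$ is the residual of $Y$ on the indices $\ge 2$.

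The one structural fact that makes this induction run is that the hypothesis $M(X)\succeq 0$ is inherited by the Schur complement, which I would prove by comparing $M(\widehat{X})$ with the Schur complement of $M(X)$. The two matrices share the non-negative diagonal $x_{p,p}-|x_{p,1}|^2/x_{1,1}$, while the triangle inequality gives, for $p\neq q$, $[M(\widehat{X})]_{p,q}=-|x_{p,q}-x_{p,1}x_{1,q}/x_{1,1}|\ge -|x_{p,q}|-|x_{p,1}||x_{1,q}|/x_{1,1}=[\mathrm{Schur}_1(M(X))]_{p,q}$, so both are Z-matrices (non-positive off-diagonal) with the off-diagonal entries of $M(\widehat{X})$ closer to zero. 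Writing $\mathrm{Schur}_1(M(X))=D-P_A$ and $M(\widehat{X})=D-P_B$ with $0\le P_B\le P_A$ entrywise, positive semidefiniteness of $\mathrm{Schur}_1(M(X))$ (as a Schur complement of the positive semidefinite matrix $M(X)$) gives $\rho(D^{-1/2}P_AD^{-1/2})\le 1$, and Perron--Frobenius monotonicity then yields $\rho(D^{-1/2}P_BD^{-1/2})\le 1$, i.e. $M(\widehat{X})\succeq 0$. A vanishing pivot $x_{1,1}=0$ or a vanishing Schur diagonal forces an entire row and column of $X$ to be zero, and those indices are dispatched directly by the diagonal case of Lemma~\ref{lem:ccp_diagonal}.

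The hard part will be the choice of modulus split at each peeling step, which is exactly the reconstruction of the phase information of $X$ from the non-negative data of $|X|$. The constraints $|v_{1,i}||w_{1,i}|=|u_i|$ already pin down the product $[\widehat{Y}]_{p,q}[\widehat{Y}]_{q,p}=(y_{p,q}-s_pt_q)(y_{q,p}-s_qt_p)$ up to the fixed quantity $s_pt_q\cdot s_qt_p=|u_p|^2|u_q|^2$, so only the ratios of the split are free, and a naive symmetric split can drive an entry of the residual $Y$ negative even when $|X|$ is completely positive. The crux is therefore to show that the inherited condition $M(\widehat{X})\succeq 0$, together with condition~(d), always guarantees an admissible split for which $\widehat{Y}$ is entrywise non-negative and compatible with $\widehat{X}$ under conditions~(c) and~(d); establishing the existence of such a split is where essentially all the work of the proof lies. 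A cleaner alternative, if a usable closed form for the split is available, is to realize the completely positive decomposition $\{\mathbf{a_k}\}$ of $|X|$ as the entrywise-modulus pattern of a positive semidefinite factorization of $X$, so that setting $|v_{k,i}|=|w_{k,i}|=\sqrt{a_{k,i}}$ and distributing the phases of $X$ reproduces $X$ and $|X|$ simultaneously, with the passage to general $Y$ then handled by Lemma~\ref{lem:ccp_increate_y_offdiag}.
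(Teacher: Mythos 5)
Your proposal has a genuine gap at its central step, and you name it yourself: the existence, at every peeling stage, of a modulus split $|v_{1,i}||w_{1,i}|=|u_i|$ for which the residual $\widehat{Y}$ stays entrywise non-negative and still satisfies conditions~(c) and~(d) against $\widehat{X}$. Everything you actually establish---that $M(X)\succeq 0$ forces $|X|\succeq 0$, that $|X|$ is completely positive by \cite{DJL94}, and that $M(\widehat{X})\succeq 0$ is inherited under Schur complementation (your Z-matrix comparison plus Perron--Frobenius argument is fine)---concerns only the moduli $|x_{i,j}|$. But the whole difficulty of the theorem, as the paper itself remarks immediately after its proof, is that it cannot be deduced by citing the completely positive theory as a black box: one must simultaneously reconstruct the complex phases of $X$ and produce the generally non-symmetric $Y$, and your induction defers exactly these two tasks to the unproven split lemma. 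What is missing is not a technical detail; it is essentially the entire theorem.

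Your ``cleaner alternative'' is moreover false as stated. If $|X|=\sum_k \mathbf{a_k}\mathbf{a}_{\mathbf{k}}^*$ is an arbitrary CP decomposition and one seeks unimodular phase vectors $\mathbf{p_k}$ with $X=\sum_k(\mathbf{p_k}\odot\mathbf{a_k})(\mathbf{p_k}\odot\mathbf{a_k})^*$, then since $\sum_k a_{k,i}a_{k,j}=|x_{i,j}|$, the equality case of the triangle inequality forces $p_{k,i}\overline{p_{k,j}}=\operatorname{sign}(x_{i,j})$ for every $k$ with $a_{k,i}a_{k,j}\neq 0$. For the matrix $X$ of Example~\ref{exam:weird_qutrit} (off-diagonal signs $+,+,-$, and $M(X)\succeq 0$ by diagonal dominance), any $\mathbf{a_k}$ with all three entries nonzero gives a contradiction: the product $p_{k,1}\overline{p_{k,2}}\cdot p_{k,2}\overline{p_{k,3}}\cdot p_{k,3}\overline{p_{k,1}}$ is identically $1$, while the required signs multiply to $-1$. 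So phases can be distributed only over decompositions whose vectors have suitably structured (e.g.\ two-point) supports, which a black-box invocation of \cite{DJL94} does not provide. This is precisely what the paper's proof arranges instead: it normalizes via Lemma~\ref{lem:ccp_increate_y_offdiag} so that $x_{i,i}=\sum_{j\neq i}|x_{i,j}|$ and $y_{i,j}y_{j,i}=|x_{i,j}|^2$, scales by the Perron eigenvector of $P$ (where $M(X)=\alpha I-P$) to reduce to the diagonally dominant case, and then writes down an explicit decomposition \`a la \cite{Kay87} whose columns are supported on pairs $\{k,\ell\}$, carrying $\operatorname{sign}(x_{k,\ell})$ on one coordinate of $V$ and the fourth roots $y_{k,\ell}^{1/4},y_{\ell,k}^{1/4}$ arranged to generate the asymmetric $Y$. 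The two-point support is the structural feature that makes arbitrary phases and asymmetric $Y$ tractable, and it is exactly what your reduction discards.
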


\begin{proof}
    We start by showing that if conditions (a)--(d) of Theorem~\ref{thm:ccp_properties} hold and $X$ is diagonally dominant then $(X,Y)$ is PCP.
    
    First, we note that we can assume without loss of generality that $x_{i,i}= \sum_{j \neq i} |x_{i,j}|$ for all $1 \leq i \leq n$ and $y_{i,j}y_{j,i}=|x_{i,j}|^{2}$ for all $1 \leq i \neq j \leq n$, since the more general case where the quantities on the left are larger than the quantities on the right then follows from Lemma~\ref{lem:ccp_increate_y_offdiag}.
    
    Using an approach motivated by \cite{Kay87}, we construct matrices $V,W, \in M_{n,m}(\mathbb{C})$ with $m = n(n-1)/2$ such that $X = (V\odot W)(V\odot W)^{*}$ and $Y = (V\odot \overline{V})(W\odot \overline{W})^{*}$. We index the $n$ rows of $V$ and $W$ in the usual way (by a number $1 \leq j \leq n$), but we index their $n(n-1)/2$ columns by sets $\{k,\ell\}$ for which $k \neq \ell$ (we will think of these columns as corresponding to the $n(n-1)/2$ entries in the strictly upper-triangular portion of $X$ and $Y$).
    
    We define $V$ and $W$ entrywise as follows (we use the convention that $k < \ell$ throughout this definition):
    \begin{align*}
        v_{j,\{k,\ell\}} = \begin{cases}
            \operatorname{sign}(x_{k,\ell})y_{k,\ell}^{1/4} & \text{if} \ j = k \\
            y_{\ell,k}^{1/4} & \text{if} \ j = \ell \\
            0 & \text{otherwise}
        \end{cases} \qquad \text{and} \qquad w_{j,\{k,\ell\}} = \begin{cases}
            y_{\ell,k}^{1/4} & \text{if} \ j = k \\
            y_{k,\ell}^{1/4} & \text{if} \ j = \ell \\
            0 & \text{otherwise},
        \end{cases}
    \end{align*}
    where $\operatorname{sign}(x_{k,\ell})$ is the (complex) sign of $x_{k,\ell}$: the number on the unit circle in the complex plane with the property that $x_{k,\ell} = \operatorname{sign}(x_{k,\ell})|x_{k,\ell}|$. Then
    \begin{align*}
        \big[(V \odot \overline{V})(W \odot \overline{W})^*\big]_{i,j}=\sum_{k < \ell}|v_{i,\{k,\ell\}}|^{2}|w_{j,\{k,\ell\}}|^{2}.
    \end{align*}
    When $i \neq j$, the only nonzero term in this sum arises when $\{k,\ell\} = \{i,j\}$, so it simplifies considerably to
    \begin{align*}
        \big[(V \odot \overline{V})(W \odot \overline{W})^*\big]_{i,j} & = |v_{i,\{i,j\}}|^{2}|w_{j,\{i,j\}}|^{2} = (y_{i,j}^{1/4})^2 (y_{i,j}^{1/4})^2 = y_{i,j},
    \end{align*}
    as desired. Similarly, if $i < j$ then the $(i,j)$-entry of $X$ is given by
    \begin{align*}
        \big[(V \odot W)(V \odot W)^* \big]_{i,j} & = v_{i,\{i,j\}}w_{i,\{i,j\}}\overline{v_{j,\{i,j\}}}\overline{w_{j,\{i,j\}}} = \operatorname{sign}(x_{i,j})y_{i,j}^{1/4} \times y_{j,i}^{1/4} \times y_{j,i}^{1/4} \times y_{i,j}^{1/4} = x_{i,j},
    \end{align*}
    with the computation when $i > j$ being analogous.
    
    Finally, we show the given factorization yields the correct diagonal entries $y_{i,i} = x_{i,i} = \sum_{k \neq i}|x_{i,k}|$:
    \begin{align*}
        \big[(V \odot \overline{V})(W \odot \overline{W})^* \big]_{i,i} = \sum_{k < \ell}|v_{i,\{k,\ell\}}|^{2}|w_{i,\{k,\ell\}}|^{2} = \sum_{k \neq i} y_{i,k}^{1/2}y_{k,i}^{1/2} = \sum_{k\neq i} |x_{i,k}| = x_{k,k},
    \end{align*}
    which completes the proof that if conditions (a)--(d) of Theorem~\ref{thm:ccp_properties} hold and $X$ is diagonally dominant then $(X,Y)$ is PCP. 
    
    Next, we show that if $M = M(X)$ is positive semidefinite then there exists an entrywise non-negative diagonal matrix $D$ such that $DXD$ is diagonally dominant. To this end, notice that we can write $M = \alpha I - P$, where $P \geq 0$ is a real symmetric entrywise non-negative matrix, and positive semidefiniteness of $M$ tells us that $\alpha \geq \lambda_{\textup{max}}(P)$, where $\lambda_{\textup{max}}(P)$ is the largest eigenvalue of $P$. Without loss of generality, we can assume that $P$ is irreducible and thus has a Perron eigenvector $\mathbf{x}$ with strictly positive entries (otherwise we can either use a continuity argument or the fact that every entrywise non-negative $P$ can, up to permutation similarity, be written in a block upper triangular form with irreducible diagonal blocks).
    
    If we let $D = \operatorname{diag}(\mathbf{x})$ then
    \begin{align*}
        DMD\mathbf{1} = DM\mathbf{x} = D(\alpha I - P)\mathbf{x} = D(\alpha \mathbf{x} - \lambda_{\textup{max}}(P)\mathbf{x}) = D(\alpha - \lambda_{\textup{max}}(P))\mathbf{x} \geq 0,
    \end{align*}
    since we already noted that $\alpha \geq \lambda_{\textup{max}}(P)$. Since $DMD\mathbf{1} \geq 0$ we know that $DMD$ is diagonally dominant, and since $DXD$ has the same entrywise absolute values as $DMD$, we conclude that $DXD$ is diagonally dominant as well.
    
    It is straightforward to check that if $(X,Y)$ satisfies conditions (a)--(d) of Theorem~\ref{thm:ccp_properties} then so does $(DXD,DYD)$, so we have just shown that $(DXD,DYD)$ is PCP. It follows that \[(D^{-1}DXDD^{-1},D^{-1}DYDD^{-1}) = (X,Y)\]is PCP as well, which completes the proof.
\end{proof}

It is perhaps worth noting that the above theorem does \emph{not} follow directly from combining the known result that diagonal dominance implies complete positivity with Theorem~\ref{thm:cp_is_ccp}, since the result for completely positive matrices only applies to matrices $X$ with non-negative entries. Our result, however, applies even if $X$ has negative (or complex) entries.

For example, the above theorem provides another way of seeing that the pair $(X,Y)$ from Example~\ref{exam:werner_ccp} is PCP when $a = n$ and $b = -1$, since the matrix $X$ in this pair is diagonally dominant and $Y \geq X$. In fact, it even shows that the pair $(X,Y)$ with $X = nI_n - J_n$ and $Y = (n-2)I_n + J_n$ is PCP for the exact same reason (and this is a much stronger statement, since this $Y$ has much smaller off-diagonal entries). Similarly, this theorem also shows that the pair $(X,Y)$ from Example~\ref{exam:weird_qutrit} is PCP and can be used to construct another PCP decomposition of it.

\begin{example}
Consider the pair of matrices
\[
    X = \begin{bmatrix}
        2 & 1 & -1 \\
        1 & 3 & 2i \\
        -1 & -2i & 3
    \end{bmatrix} \quad \text{and} \quad Y = \begin{bmatrix}
        2 & 1 & 2 \\
        1 & 3 & 4 \\
        1/2 & 1 & 3
    \end{bmatrix}.
\]
Since this pair satisfies conditions~(a)--(d) of Theorem~\ref{thm:ccp_properties} and $X$ is diagonally dominant, Theorem~\ref{thm:m_matrix_ccp} tells us that $(X,Y)$ is pairwise completely positive. To construct an explicit PCP decomposition of it, we compute the matrices $V,W\in M_{n,m}(\mathbb{C})$ by following along through the proof of that theorem. Since $n=3$, these matrices have three rows indexed by $1 \leq j \leq 3$, and $m = n(n-1)/2=3$ columns indexed by the sets $\{k,\ell\}$ with $1 \leq k < \ell\leq 3$. The only non-zero entries of $V$ and $W$ are the ones for which $j$ equals either $k$ or $\ell$, so we know there are two entries to be computed for each of the three pairs, resulting in six non-zero entries in each of these matrices. For example, if $\{k,\ell\}=\{1,2\}$ then we have
\begin{align*}
    v_{1,\{1,2\}}&= \operatorname{sign}(x_{1,2})y_{1,2}^{1/4}=1^{1/4} = 1&w_{1,\{1,2\}}&=y_{2,1}^{1/4}=1^{1/4}=1\\
    v_{2,\{1,2\}}&=y_{2,1}^{1/4}=1^{1/4}=1 \quad &w_{2,\{1,2\}}&=y_{1,2}^{1/4}=1^{1/4}=1.
\end{align*}
Similar computations show that
\begin{align*}
     v_{1,\{1,3\}}&= \operatorname{sign}(x_{1,3})y_{1,3}^{1/4}=-2^{1/4}&w_{1,\{1,3\}}&=y_{3,1}^{1/4}=(1/2)^{1/4}=2^{-1/4} \\
     v_{3,\{1,3\}}&=y_{3,1}^{1/4}=(1/2)^{1/4}=2^{-1/4} \quad & w_{3,\{1,3\}}&=y_{1,3}^{1/4}=2^{1/4}, \\
	 v_{2,\{2,3\}}&= \operatorname{sign}(x_{2,3})y_{2,3}^{1/4}=i\sqrt{2} &w_{2,\{2,3\}}&=y_{3,2}^{1/4}=1^{1/4}=1 \\
	 v_{3,\{2,3\}}&=y_{3,2}^{1/4}=1^{1/4}=1 \quad & w_{3,\{2,3\}}&=y_{2,3}^{1/4}=4^{1/4}=\sqrt{2}.
\end{align*}
With all of the non-zero entries computed, we can now construct the matrices $V$ and $W$ that make up the PCP decomposition of $(X,Y)$:
\[
    V = \begin{bmatrix}
        v_{1,\{1,2\}} & v_{1,\{1,3\}} & v_{1,\{2,3\}} \\
        v_{2,\{1,2\}} & v_{2,\{1,3\}} & v_{2,\{2,3\}} \\
        v_{3,\{1,2\}} & v_{3,\{1,3\}} & v_{3,\{2,3\}} \\
    \end{bmatrix} = \begin{bmatrix}
        1 & -{2}^{1/4} & 0 \\
        1 & 0 & i\sqrt{2}\\
        0 & 2^{-1/4} & 1 
    \end{bmatrix} \quad \text{and} \quad W = \begin{bmatrix}
        1 & 2^{-1/4} & 0 \\
        1 & 0 & 1 \\
        0 &2^{1/4} & \sqrt{2}
    \end{bmatrix}.
\]
\end{example}

\section{Connection with Quantum Entanglement}\label{sec:quantum_sep}

A \emph{quantum state} is a positive semidefinite matrix $\rho \in M_n(\mathbb{C})$ with trace $1$. From now on, whenever we use lowercase Greek letters like $\rho$ or $\sigma$, we are implicitly assuming that it is a quantum state. One of the central questions in quantum information theory asks how to determine whether or not a state $\rho \in M_n(\mathbb{C}) \otimes M_n(\mathbb{C})$ can be written in the form
\begin{align}\label{eq:rho_separable}
    \rho = \sum_{k=1}^m \mathbf{v}_{\mathbf{k}}\mathbf{v}_{\mathbf{k}}^* \otimes \mathbf{w}_k\mathbf{w}_k^*
\end{align}
for some families of vectors $\{\mathbf{v}_k\},\{\mathbf{w}_k\} \subseteq \mathbb{C}^n$. States of this form are called \emph{separable} \cite{Wer89}, whereas states that cannot be written in this form are called \emph{entangled}. Note that $\rho$ being scaled to have trace $1$ is not important from a mathematical perspective here---the same question could be asked of any positive semidefinite matrix, so from now on we do not place any restriction on the trace of quantum states.

The problem of determining whether a state is separable or entangled is NP-hard \cite{Gha10,Gur03}, but there are many necessary or sufficient conditions that can be used to show that certain specific states are separable or entangled. The most well-known such test is the \emph{positive partial transpose (PPT)} criterion \cite{Per96,HHH96}, which says that if $\rho$ is separable then $(id \otimes T)(\rho)$ is positive semidefinite, where $id, T : M_n(\mathbb{C}) \rightarrow M_n(\mathbb{C})$ are the identity and transposition maps, respectively. Also of interest for us is the \emph{realignment criterion}, which says that if $\rho$ is separable and $R$ is the linear map on $M_n(\mathbb{C}) \otimes M_n(\mathbb{C})$ defined by $R(\mathbf{e_i}\mathbf{e}_{\mathbf{j}}^* \otimes \mathbf{e_k}\mathbf{e}_{\mathbf{\ell}}^*) = \mathbf{e_i}\mathbf{e}_{\mathbf{k}}^* \otimes \mathbf{e_j}\mathbf{e}_{\mathbf{\ell}}^*$ then $\|R(\rho)\|_{\textup{tr}} \leq \tr(\rho)$ \cite{CW03,Rud00}. For a more thorough treatment of the problem of showing that a state is separable or entangled, see the review articles \cite{GT09,HHH09}.

One of the main results of \cite{Yu16,TAQLS17} says that, for a family of quantum states called \emph{mixed Dicke states}, separability of the state is determined exactly by complete positivity of an associated matrix. The main result of this section shows that an analogous result holds for pairwise completely positive matrices and a more general family of quantum states. That is, there exists a family of quantum states that contain the mixed Dicke states as a subset with the property that they are separable if and only if an associated pair of matrices is pairwise completely positive. To begin making this more precise, we introduce the following family of quantum states:

\begin{definition}
    A mixed state $\rho \in M_n(\mathbb{C}) \otimes M_n(\mathbb{C})$ is called a \emph{conjugate local diagonal unitary invariant (CLDUI) state} if
    \begin{align*}
        (U \otimes \overline{U})\rho(U \otimes \overline{U})^* = \rho
    \end{align*}
    for all diagonal unitary matrices $U \in M_n(\mathbb{C})$.
\end{definition}

We claim that CLDUI states are exactly those that can be written in the form
\begin{align}\label{eq:LDUI_state_form}
    \rho = \sum_{i,j=1}^n x_{i,j} \mathbf{e_i}\mathbf{e}_{\mathbf{j}}^* \otimes \mathbf{e_i}\mathbf{e}_{\mathbf{j}}^* + \sum_{i \neq j=1}^n y_{i,j} \mathbf{e_i}\mathbf{e}_{\mathbf{i}}^* \otimes \mathbf{e_j}\mathbf{e}_{\mathbf{j}}^*.
\end{align}
If we collect the coefficients $\{x_{i,j}\}$ and $\{y_{i,j}\}$ into matrices $X$ and $Y$ in the usual way (defining $y_{i,i} = x_{i,i}$ for all $i$ so that $Y$ has diagonal entries) then this gives us a correspondence between CLDUI states $\rho$ and matrix pairs $(X,Y)$ for which $X \succeq 0$ (since $X$ is a submatrix of $\rho$), $Y \geq 0$ (since $Y$ consists of diagonal entries of $\rho$), and the diagonal entries of $X$ and $Y$ coincide. Given a pair of matrices $(X,Y)$ satisfying these three properties (which are exactly properties (a)--(c) of Theorem~\ref{thm:ccp_properties}), we use $\rho_{X,Y}$ to denote the associated CLDUI state~\eqref{eq:LDUI_state_form} by $\rho_{X,Y}$.

To verify that CLDUI states do indeed have the claimed form, simply observe that that if we use $[\rho]_{i,j,k,\ell} = \rho_{i,j,k,\ell}$ to denote the coefficient of the basis matrix $\mathbf{e_i}\mathbf{e}_{\mathbf{j}}^* \otimes \mathbf{e_k}\mathbf{e}_{\mathbf{\ell}}^*$ in $\rho$, then
\begin{align*}
    [(U \otimes \overline{U})\rho(U \otimes \overline{U})^*]_{i,j,k,\ell} = u_i \overline{u_j} \overline{u_k}u_\ell\rho_{i,j,k,\ell},
\end{align*}
which equals $\rho_{i,j,k,\ell}$ for all diagonal $U$ if and only if $(i,\ell) = (j,k)$, $(i,\ell) = (k,j)$, or $\rho_{i,j,k,\ell} = 0$.

For example, in the $(3 \otimes 3)$-dimensional case, every CLDUI state $\rho$ can be written in the standard basis in the following form, where we use $\cdot$ to denote entries equal to $0$:
\begin{align*}
	\rho = \left[\begin{array}{ccc|ccc|ccc}
        x_{1,1} & \cdot & \cdot & \cdot & x_{1,2} & \cdot & \cdot & \cdot & x_{1,3} \\
        \cdot & y_{1,2} & \cdot & \cdot & \cdot & \cdot & \cdot & \cdot & \cdot \\
        \cdot & \cdot & y_{1,3} & \cdot & \cdot & \cdot & \cdot & \cdot & \cdot \\\hline
        \cdot & \cdot & \cdot & y_{2,1} & \cdot & \cdot & \cdot & \cdot & \cdot \\
        x_{2,1} & \cdot & \cdot & \cdot & x_{2,2} & \cdot & \cdot & \cdot & x_{2,3} \\
        \cdot & \cdot & \cdot & \cdot & \cdot & y_{2,3} & \cdot & \cdot & \cdot \\\hline
        \cdot & \cdot & \cdot & \cdot & \cdot & \cdot & y_{3,1} & \cdot & \cdot \\
        \cdot & \cdot & \cdot & \cdot & \cdot & \cdot & \cdot & y_{3,2} & \cdot \\
        x_{3,1} & \cdot & \cdot & \cdot & x_{3,2} & \cdot & \cdot & \cdot & x_{3,3}
    \end{array}\right].
\end{align*}

CLDUI states include several other well-known families of quantum states as special cases. For example, \emph{isotropic states} \cite{HH99} which are those that are invariant under \emph{every} (not necessarily diagonal) unitary of the form $U \otimes \overline{U}$, and these states are exactly the ones for which $X = aI_n + bJ_n$ and $Y = bI_n + aJ_n$ for some $a,b \in \mathbb{R}$. Also, every mixed Dicke state \cite{Yu16,TAQLS17} is the partial transpose of a CLDUI matrix (these are exactly the states that are obtained when $X = Y$).

The following result shows that separability of a CLDUI state $\rho_{X,Y}$ is determined exactly by whether or not the associated pair of coefficient matrices $(X,Y)$ is pairwise completely positive, as well as how the other necessary conditions of Theorem~\ref{thm:ccp_properties} relate to known separability criteria.

\begin{theorem}\label{thm:ccp_separable}
    Suppose $X,Y \in M_n(\mathbb{C})$. Then the pair of matrices $(X,Y)$ has the following relationship with properties of the CLDUI state $\rho_{X,Y}$:
    \begin{enumerate}
        \item[a)] $\rho_{X,Y}$ is a positive semidefinite if and only if $X$ is positive semidefinite, $Y$ is entrywise non-negative, and $X$ and $Y$ have the same diagonal entries.
        
        \item[b)] $\rho_{X,Y}$ is furthermore a valid quantum state (i.e., $\tr(\rho_{X,Y}) = 1$) if and only if $\|Y\|_1 = 1$.
        
        \item[c)] $\rho_{X,Y}$ is separable if and only if $(X,Y)$ is pairwise completely positive.
        
        \item[d)] $\rho_{X,Y}$ has positive partial transpose if and only if $(X,Y)$ satisfies property~(d) of Theorem~\ref{thm:ccp_properties}.
        
        \item[e)] $\rho_{X,Y}$ satisfies the realignment criterion (i.e., $\|R(\rho_{X,Y})\|_{\textup{tr}} \leq \tr(\rho)$) if and only if $(X,Y)$ satisfies property~(e) of Theorem~\ref{thm:ccp_properties}.
    \end{enumerate}
\end{theorem}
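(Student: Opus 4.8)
The plan is to prove each of the five parts by unwinding the definition of the CLDUI state $\rho_{X,Y}$ in~\eqref{eq:LDUI_state_form} and matching the relevant quantum-information property against the corresponding matrix condition. Parts~(a) and~(b) are essentially bookkeeping: I would read off the positive semidefiniteness and trace conditions directly from the block structure exhibited in the $(3\otimes 3)$ example. Specifically, for part~(a), observe that after the $(U\otimes\overline U)$-invariance forces the sparsity pattern shown, $\rho_{X,Y}$ decomposes (up to permutation) into the $n\times n$ principal block carrying the $x_{i,j}$ entries together with a collection of $1\times1$ and $2\times2$ blocks carrying the off-diagonal $y_{i,j}$ entries; positive semidefiniteness of the whole is equivalent to $X\succeq 0$ together with $y_{i,j}y_{j,i}\ge 0$ (automatic once $Y\ge 0$) and nonnegativity of the diagonal, which is exactly conditions~(a)--(c). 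Part~(b) is then immediate since $\tr(\rho_{X,Y})=\sum_i x_{i,i}+\sum_{i\neq j}y_{i,j}=\sum_{i,j}y_{i,j}=\|Y\|_1$ (using $y_{i,i}=x_{i,i}$ and $Y\ge 0$).

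\textbf{The core equivalence, part~(c).} This is the heart of the theorem, and the plan is to match the separable decomposition~\eqref{eq:rho_separable} against the PCP decomposition~\eqref{eq:ccp_defn_sum} term by term. Starting from a PCP decomposition $\{\mathbf v_k\},\{\mathbf w_k\}$, I would form the product vectors $\mathbf a_k \defeq \mathbf v_k\odot\overline{\mathbf w_k}$ (or a similarly chosen pairing) and check that the separable state $\sum_k \mathbf a_k\mathbf a_k^*\otimes\overline{\mathbf a_k}\,\overline{\mathbf a_k}^*$ produces precisely the coefficients $x_{i,j}$ on the ``$\mathbf e_i\mathbf e_j^*\otimes\mathbf e_i\mathbf e_j^*$'' terms and $y_{i,j}$ on the ``$\mathbf e_i\mathbf e_i^*\otimes\mathbf e_j\mathbf e_j^*$'' terms of~\eqref{eq:LDUI_state_form}. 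The subtle point, and the main obstacle, is the converse: given an arbitrary separable decomposition of $\rho_{X,Y}$, the individual product terms need not respect the CLDUI symmetry, so I would first symmetrize by averaging over the diagonal unitary group $U\mapsto U\otimes\overline U$. Because $\rho_{X,Y}$ is invariant under this twirl, the averaged decomposition still represents $\rho_{X,Y}$ and remains separable (the twirl sends product states to product states), and the averaging forces each surviving term into the product form $(\mathbf v\odot\overline{\mathbf w})(\cdots)^*$ that reconstitutes a PCP decomposition. I expect the care needed in tracking the Hadamard/conjugation bookkeeping through the twirl to be the most delicate part of the argument.

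\textbf{Parts~(d) and~(e): matching known criteria.} For part~(d), I would compute the partial transpose $(\op{id}\otimes T)(\rho_{X,Y})$ explicitly. Applying $T$ to the second system swaps the roles of the off-diagonal $x$ and $y$ coefficients, and the resulting matrix again block-decomposes into $2\times2$ blocks of the form $\left[\begin{smallmatrix} y_{i,j} & x_{i,j} \\ \overline{x_{i,j}} & y_{j,i}\end{smallmatrix}\right]$ (for each pair $i\neq j$) together with the diagonal; positive semidefiniteness of every such block is exactly the determinant condition $y_{i,j}y_{j,i}\ge|x_{i,j}|^2$, which is property~(d). For part~(e), I would compute the realignment $R(\rho_{X,Y})$ and show its trace norm separates into contributions from the $X$ part and the $Y$ part: the $x$-coefficients rearrange into a matrix whose trace norm relates to $\|X\|_1-\|X\|_{\textup{tr}}$ while the $y$-coefficients contribute $\|Y\|_{\textup{tr}}$ against the total trace $\|Y\|_1$. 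After simplifying using $\tr(\rho_{X,Y})=\|Y\|_1$ and the positive-semidefinite identity $\|X\|_{\textup{tr}}=\tr(X)$, the inequality $\|R(\rho_{X,Y})\|_{\textup{tr}}\le\tr(\rho_{X,Y})$ rearranges into exactly $\|X\|_1-\|X\|_{\textup{tr}}\le\|Y\|_1-\|Y\|_{\textup{tr}}$, which is property~(e). The main risk in these last two parts is correctly identifying how $R$ reshuffles the two kinds of coefficients, so I would verify the reshuffling against the explicit $(3\otimes3)$ matrix before writing the general computation.
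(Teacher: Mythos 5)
Your parts (a), (b), (d), (e), and the ``separable $\Rightarrow$ PCP'' half of part (c) follow essentially the same route as the paper: block-decompose $\rho_{X,Y}$ (respectively its partial transpose and its realignment) and read off the conditions, and, for the converse of (c), twirl an arbitrary separable decomposition over the diagonal unitary group. Two small slips there are worth flagging: in $\rho_{X,Y}$ itself the off-diagonal $y_{i,j}$'s sit in $1\times 1$ blocks, so positivity forces each $y_{i,j}\geq 0$ individually---the $2\times 2$ blocks and the determinant condition $y_{i,j}y_{j,i}\geq |x_{i,j}|^2$ belong to the \emph{partial transpose} in part (d), not to $\rho_{X,Y}$; and the twirl does not send product states to product states (it sends them to separable, generally non-product, CLDUI states), though that is all the converse argument actually needs.

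The genuine gap is in the ``PCP $\Rightarrow$ separable'' direction of part (c). First, your concrete construction fails: with $\mathbf{a_k} = \mathbf{v_k}\odot\overline{\mathbf{w_k}}$, the coefficient of $\sum_k \mathbf{a_k}\mathbf{a_k}^*\otimes \overline{\mathbf{a_k}}\,\overline{\mathbf{a_k}}^*$ on the $\mathbf{e_i}\mathbf{e_j}^*\otimes\mathbf{e_i}\mathbf{e_j}^*$ position and on the $\mathbf{e_i}\mathbf{e_i}^*\otimes\mathbf{e_j}\mathbf{e_j}^*$ position are \emph{both} equal to $\sum_k |a_{k,i}|^2|a_{k,j}|^2$, so this state can only reproduce pairs with $X=Y$ entrywise non-negative of that special form; the Hadamard product must not be used to build the local vectors---the correct candidate keeps the tensor factors separate, $\sigma \defeq \sum_k \mathbf{v_k}\mathbf{v_k}^*\otimes\overline{\mathbf{w_k}}\,\overline{\mathbf{w_k}}^*$. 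Second, and more fundamentally, even this correct $\sigma$ is \emph{not} equal to $\rho_{X,Y}$: it agrees with $\rho_{X,Y}$ on the CLDUI sparsity pattern but generically has nonzero entries outside that pattern, so ``checking that the coefficients match'' cannot close the argument. The missing step is to apply the twirl in this direction as well: one has $\int_U (U\otimes\overline{U})\sigma(U\otimes\overline{U})^*\,dU = \rho_{X,Y}$, because each twirled term contributes exactly $(\mathbf{v_k}\odot\mathbf{w_k})(\mathbf{v_k}\odot\mathbf{w_k})^*$ to $X$ and $(\mathbf{v_k}\odot\overline{\mathbf{v_k}})(\mathbf{w_k}\odot\overline{\mathbf{w_k}})^*$ to $Y$, and the twirl is an average of local-unitary conjugates of the separable state $\sigma$, hence separable. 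This is precisely the point the paper stresses in Section~\ref{sec:separable_length}: a PCP decomposition yields a separable decomposition of \emph{some} state that twirls to $\rho_{X,Y}$, not of $\rho_{X,Y}$ itself.
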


\begin{proof}
    We already proved the ``only if'' direction of part~(a) above. For the converse, we just note that if $X$ and $Y$ have the three properties described then $\rho$ must be positive semidefinite since up to permutation similarity it can be written as the direct sum $X \oplus \left(\bigoplus_{i\neq j=1}^n y_{i,j}\right)$, and each term in that direct sum is positive semidefinite.
    
    Part~(b) follows simply from noting that $Y \geq 0$ and that it consists of the diagonal entries of $\rho_{X,Y}$.
    
    For part~(c), first note that for every rank-$1$ separable state $\mathbf{v}\mathbf{v}^* \otimes \mathbf{w}\mathbf{w}^*$, it is the case that integrating over diagonal unitaries with respect to Haar measure gives
    \begin{align*}
    	\int_U (U \otimes \overline{U}) \big(\mathbf{v}\mathbf{v}^* \otimes \mathbf{w}\mathbf{w}^*\big) (U \otimes \overline{U})^* \, dU & = \sum_{i,j=1}^n v_i\overline{v_j}\overline{w_i}w_j\mathbf{e_i}\mathbf{e}_{\mathbf{j}}^* \otimes \mathbf{e_i}\mathbf{e}_{\mathbf{j}}^* + \sum_{i \neq j=1}^n |v_i|^2|w_j|^2\mathbf{e_i}\mathbf{e}_{\mathbf{i}}^* \otimes \mathbf{e_j}\mathbf{e}_{\mathbf{j}}^*.
    \end{align*}
    Well if $\rho_{X,Y}$ is a CLDUI state then $\rho_{X,Y} = \int_U (U \otimes \overline{U}) \rho_{X,Y} (U \otimes \overline{U})^* \, dU$, so it follows that $\rho_{X,Y}$ is separable (i.e., $\rho_{X,Y} = \sum_{k} \mathbf{v}_{\mathbf{k}}\mathbf{v}_{\mathbf{k}}^* \otimes \mathbf{w}_k\mathbf{w}_k^*$) if and only if the pair of matrices $(X,Y)$ satisfy
    \[
        x_{i,j} = \sum_k v_{k,i}\overline{v_{k,j}}\overline{w_{k,i}}w_{k,j} \quad \text{and} \quad y_{i,j} = \sum_k |v_{k,i}|^2|w_{k,j}|^2 \quad \text{for all} \quad 1 \leq i,j \leq n.
    \]
    In other words, $\rho_{X,Y}$ is separable if and only if
    \[
    	X = \sum_k \mathbf{(v_k\odot \overline{w_k})}\mathbf{(v_k\odot \overline{w_k})}^* \qquad \text{and} \qquad Y = \sum_k \mathbf{(v_k\odot \overline{v_k})}\mathbf{(w_k\odot\overline{w_k})}^*,
    \]
    which is a PCP decomposition of $(X,Y)$ (to get this decomposition in the form~\eqref{eq:ccp_defn_sum}, just replace each $\mathbf{w_k}$ by $\overline{\mathbf{w_k}}$).
    
    For part~(d), we notice that if $\rho_{X,Y}$ is CLDUI then up to permutation similarity $(id \otimes T)(\rho_{X,Y})$ can be written as the direct sum of $1 \times 1$ and $2 \times 2$ matrices:
    \[
        (id \otimes T)(\rho_{X,Y}) = \left(\bigoplus_{i=1}^n x_{i,i}\right) \oplus \left(\bigoplus_{i<j} \begin{bmatrix}
            y_{i,j} & x_{i,j} \\
            x_{j,i} & y_{j,i}
        \end{bmatrix}\right).
    \]
    Since $y_{i,j}, y_{j,i} \geq 0$ for all $i,j$ and $x_{i,i} \geq 0$, we conclude that $(id \otimes T)(\rho_{X,Y})$ is positive semidefinite if and only if the determinant of each of the $2 \times 2$ matrices is non-negative---i.e., $y_{i,j}y_{j,i} \geq |x_{i,j}|^2$.
    
    Finally, for part~(e) we note that $Y$ contains exactly the diagonal entries of $\rho_{X,Y}$, so $\tr(\rho_{X,Y}) = \|Y\|_1$. Also,
    \[
        R(\rho_{X,Y}) = \sum_{i,j=1}^n x_{i,j} \mathbf{e_i}\mathbf{e}_{\mathbf{i}}^* \otimes \mathbf{e_j}\mathbf{e}_{\mathbf{j}}^* + \sum_{i \neq j=1}^n y_{i,j} \mathbf{e_i}\mathbf{e}_{\mathbf{j}}^* \otimes \mathbf{e_i}\mathbf{e}_{\mathbf{j}}^*,
    \]
    and since this matrix is block diagonal its trace norm simplifies to $\|R(\rho_{X,Y})\|_{\textup{tr}} = (\|X\|_1 - \tr(X)) + \|Y\|_{\textup{tr}}$. By using the fact that $\tr(X) = \|X\|_{\textup{tr}}$ (since $X$ is positive semidefinite) and rearranging, we see that the realignment criterion $\|R(\rho_{X,Y})\|_{\textup{tr}} \leq \tr(\rho)$ is equivalent to $\|X\|_1 - \|X\|_{\textup{tr}} \leq \|Y\|_1 - \|Y\|_{\textup{tr}}$.
\end{proof}

\begin{example}\label{exam:3_dim_sep_eg}
	The pair $(X,Y)$ from Example~\ref{exam:3dim_eg} given by
	\begin{align*}
	X = \begin{bmatrix}
    	1 & 1 & 1 \\
    	1 & 1 & 1 \\
    	1 & 1 & 1
	\end{bmatrix} \quad \text{and} \quad Y = \begin{bmatrix}
		1 & a & 1/a \\
		1/a & 1 & a \\
		a & 1/a & 1
	\end{bmatrix}
	\end{align*}
	corresponds to the CLDUI state
	\[
        \rho_{X,Y} = \left[\begin{array}{ccc|ccc|ccc}
            1 & \cdot & \cdot & \cdot & 1 & \cdot & \cdot & \cdot & 1 \\
            \cdot & a & \cdot & \cdot & \cdot & \cdot & \cdot & \cdot & \cdot \\
            \cdot & \cdot & 1/a & \cdot & \cdot & \cdot & \cdot & \cdot & \cdot \\\hline
            \cdot & \cdot & \cdot & 1/a & \cdot & \cdot & \cdot & \cdot & \cdot \\
            1 & \cdot & \cdot & \cdot & 1 & \cdot & \cdot & \cdot & 1 \\
            \cdot & \cdot & \cdot & \cdot & \cdot & a & \cdot & \cdot & \cdot \\\hline
            \cdot & \cdot & \cdot & \cdot & \cdot & \cdot & a & \cdot & \cdot \\
            \cdot & \cdot & \cdot & \cdot & \cdot & \cdot & \cdot & 1/a & \cdot \\
            1 & \cdot & \cdot & \cdot & 1 & \cdot & \cdot & \cdot & 1
        \end{array}\right].
	\]
	Indeed, this state has positive partial transpose for all $a > 0$ (which corresponds to the fact that $(X,Y)$ satisfies condition~(d) of Theorem~\ref{thm:ccp_properties} for all $a > 0$). However, it is straightforward to show that it only satisfies the realignment criterion when $a = 1$ (which corresponds to the fact that $(X,Y)$ satisfies condition~(e) of Theorem~\ref{thm:ccp_properties} if and only if $a = 1$).
	
	The fact that $(X,Y)$ is PCP when $a = 1$ tells us that $\rho_{X,Y}$ is separable when $a = 1$. Again, this fact is known, but it is not straightforward to see (i.e., there is no ``obvious'' separable decomposition of $\rho_{X,Y}$). We elaborate on this point of why PCP decomposition can be simple without there being a simple separable decomposition in Section~\ref{sec:separable_length}.
\end{example}

\begin{example}\label{exam:werner_ccp_revisited}
    The pair $(X,Y)$ from Example~\ref{exam:werner_ccp} given by
    \begin{align*}
        X = aI_n + bJ_n \quad \text{and} \quad Y = bI_n + aJ_n
    \end{align*}
    corresponds to the well-known \emph{isotropic states}---states of the form
    \[
        \rho_{X,Y} = a(I_n \otimes I_n) + b\left(\sum_{i=1}^n \mathbf{e_i} \otimes \mathbf{e_i}\right)\left(\sum_{i=1}^n \mathbf{e_i} \otimes \mathbf{e_i}\right)^*.
    \]
    The fact that the pairs $(X,Y)$ satisfy condition~(d) of Theorem~\ref{thm:ccp_properties} if and only if they are PCP if and only if $a \geq 0$ and $-a/n \leq b \leq a$ corresponds to the well-known fact that isotropic states have positive partial transpose if and only if they are separable if and only if $a \geq 0$ and $-a/n \leq b \leq a$.
\end{example}

Although the conditions from Theorem~\ref{thm:ccp_properties} all correspond to well-known separability criteria, to the best of our knowledge the tests from Theorems~\ref{thm:ccp_n2} and~\ref{thm:m_matrix_ccp} provide truly new methods of showing that CLDUI states are separable. Of particular note is the following result:

\begin{corollary}\label{cor:ldui_state_m_matrix}
    Suppose $\rho \in M_n(\mathbb{C}) \otimes M_n(\mathbb{C})$ is a CLDUI state. If $\rho$ has positive partial transpose and $M(\rho)$ is positive semidefinite then $\rho$ is separable.
\end{corollary}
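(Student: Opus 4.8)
The plan is to translate this statement about the CLDUI state $\rho$ into the language of its associated coefficient pair $(X,Y)$ and then invoke Theorem~\ref{thm:m_matrix_ccp}. Since $\rho$ is CLDUI, it equals $\rho_{X,Y}$ for a pair $(X,Y)$ that automatically satisfies conditions (a)--(c) of Theorem~\ref{thm:ccp_properties}. The hypothesis that $\rho$ has positive partial transpose gives, via Theorem~\ref{thm:ccp_separable}(d), that $(X,Y)$ also satisfies condition (d). Thus once I show that positive semidefiniteness of $M(\rho)$ forces positive semidefiniteness of $M(X)$, Theorem~\ref{thm:m_matrix_ccp} immediately yields that $(X,Y)$ is PCP, and then Theorem~\ref{thm:ccp_separable}(c) gives separability of $\rho$.

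The crux is therefore the relationship between $M(\rho)$ and $M(X)$, which I would establish by examining the sparsity pattern of $\rho_{X,Y}$ in the standard basis (as displayed for $n=3$ in the excerpt). The diagonal entries of $\rho$ are exactly the numbers $y_{i,j}$ (with $y_{i,i}=x_{i,i}$), all of which are non-negative by condition (b), so forming the comparison matrix leaves the diagonal unchanged. The only off-diagonal entries of $\rho$ are the values $x_{i,j}$ with $i\neq j$, each sitting in the row indexed by $(i,i)$ and the column indexed by $(j,j)$. Restricting $M(\rho)$ to the $n$ indices $(1,1),(2,2),\ldots,(n,n)$ therefore produces a principal submatrix whose diagonal is $x_{i,i}=|x_{i,i}|$ (using $X\succeq 0$) and whose off-diagonal entries are $-|x_{i,j}|$; this is precisely $M(X)$.

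Since a principal submatrix of a positive semidefinite matrix is itself positive semidefinite, the assumption $M(\rho)\succeq 0$ immediately gives $M(X)\succeq 0$, supplying the missing link. I expect the main (and only modest) obstacle to be bookkeeping the index structure of $\rho_{X,Y}$ carefully enough to confirm that $M(X)$ appears as an honest principal submatrix of $M(\rho)$; once that identification is in hand, the rest is a short chain of citations to earlier results. As a remark, the same index analysis shows that $M(\rho)$ is in fact permutation-similar to the direct sum $M(X)\oplus\bigl(\bigoplus_{i\neq j}y_{i,j}\bigr)$, so under condition (b) the condition $M(\rho)\succeq 0$ is actually equivalent to $M(X)\succeq 0$.
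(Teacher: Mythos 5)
Your proposal is correct and follows essentially the same route as the paper, whose entire proof is the one-line observation that the result ``follows immediately from combining Theorems~\ref{thm:m_matrix_ccp} and~\ref{thm:ccp_separable}.'' The one thing you add beyond the paper is the explicit bookkeeping showing that $M(X)$ sits inside $M(\rho)$ as the principal submatrix on the indices $(1,1),\ldots,(n,n)$ (indeed that $M(\rho)$ is permutation-similar to $M(X)\oplus\bigl(\bigoplus_{i\neq j}y_{i,j}\bigr)$), which is exactly the step the paper leaves implicit in passing from its hypothesis on $M(\rho)$ to the hypothesis on $M(X)$ required by Theorem~\ref{thm:m_matrix_ccp}.
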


\begin{proof}
    This follows immediately from combining Theorems~\ref{thm:m_matrix_ccp} and~\ref{thm:ccp_separable}.
\end{proof}

We will make use of this corollary shortly to answer a question about absolutely separable states. We leave the question of how to generalize Corollary~\ref{cor:ldui_state_m_matrix} to arbitrary states (or even \emph{if} it can be generalized to arbitrary states in any meaningful way) as an open question.

\subsection{The Length of Decompositions}\label{sec:separable_length}

We now clarify that, even though Theorem~\ref{thm:ccp_separable} shows that separable CLDUI states and PCP matrices are in one-to-one correspondence, it can be significantly easier to come up with a PCP decomposition of a pair $(X,Y)$ than to come up with a separable decomposition of the corresponding CLDUI state $\rho_{X,Y}$. To make this observation a bit more explicit, we define the \emph{length} of a PCP pair $(X,Y)$, denoted by $\ell(X,Y)$, to be the least integer $m$ in the PCP decomposition~\eqref{eq:ccp_defn_sum} (i.e., the smallest number of terms in the sum in a PCP decomposition). The \emph{length} of a separable state $\rho$, denoted by $\ell(\rho)$, is similarly defined to be the least integer $m$ in the separable decomposition~\eqref{eq:rho_separable}.

It turns out that the length of a PCP pair $(X,Y)$ may be significantly smaller than the length of the corresponding CLDUI state $\rho_{X,Y}$. For example, we noted in Example~\ref{exam:3dim_eg} that the pair $(X,Y)$ with
\begin{align}\label{eq:pcp_all_ones}
    X = Y = \begin{bmatrix}
        1 & 1 & 1 \\
        1 & 1 & 1 \\
        1 & 1 & 1
    \end{bmatrix}
\end{align}
is PCP since it has the trivial decomposition $\mathbf{v_1} = \mathbf{w_1} = (1,1,1)^T$. In particular, this means that $(X,Y)$ has length~$1$. On the other hand, the corresponding quantum state $\rho_{X,Y} \in M_3(\mathbb{C}) \otimes M_3(\mathbb{C})$ is
\[
    \rho_{X,Y} = \left[\begin{array}{ccc|ccc|ccc}
        1 & \cdot & \cdot & \cdot & 1 & \cdot & \cdot & \cdot & 1 \\
        \cdot & 1 & \cdot & \cdot & \cdot & \cdot & \cdot & \cdot & \cdot \\
        \cdot & \cdot & 1 & \cdot & \cdot & \cdot & \cdot & \cdot & \cdot \\\hline
        \cdot & \cdot & \cdot & 1 & \cdot & \cdot & \cdot & \cdot & \cdot \\
        1 & \cdot & \cdot & \cdot & 1 & \cdot & \cdot & \cdot & 1 \\
        \cdot & \cdot & \cdot & \cdot & \cdot & 1 & \cdot & \cdot & \cdot \\\hline
        \cdot & \cdot & \cdot & \cdot & \cdot & \cdot & 1 & \cdot & \cdot \\
        \cdot & \cdot & \cdot & \cdot & \cdot & \cdot & \cdot & 1 & \cdot \\
        1 & \cdot & \cdot & \cdot & 1 & \cdot & \cdot & \cdot & 1
    \end{array}\right],
\]
which is indeed separable, but it is much more difficult to ``directly'' see that this is the case. Indeed, it is straightforward to show that $\ell(\rho) \geq \mathrm{rank}(\rho_{X,Y}) = 7$, and the ``standard'' way of showing that $\rho_{X,Y}$ is separable is to use a twirling argument \cite{HH99} that does not give a good bound on $\ell(\rho_{X,Y})$.

As an even more extreme example, consider the PCP pair $(X,Y)$ given by
\[
    X = Y = I_n + J_n \in M_n(\mathbb{C}).
\]
We noted in Example~\ref{exam:werner_ccp} that this pair is PCP, and we provided a decomposition that shows that $\ell(X,Y) \leq n$. In fact, $\ell(X,Y) = n$ since $\ell(X,Y) \geq \mathrm{rank}(X) = n$ as well. On the other hand, the corresponding CLDUI quantum state $\rho_{X,Y} \in M_n(\mathbb{C}) \otimes M_n(\mathbb{C})$ is
\[
    \rho_{X,Y} = I_n \otimes I_n + \left(\sum_{i=1}^n \mathbf{e_i} \otimes \mathbf{e_i}\right)\left(\sum_{i=1}^n \mathbf{e_i} \otimes \mathbf{e_i}\right)^*.
\]
It is straightforward to check that $\ell(\rho_{X,Y}) \geq \mathrm{rank}(\rho_{X,Y}) = n^2$, but determining whether or not this inequality is actually equality is extremely difficult. In particular, it was recently noted \cite{PPPR18} that the question of whether or not $\ell(\rho_{X,Y}) = n^2$ for this particular state $\rho_{X,Y}$ is equivalent to the well-known SIC-POVM conjecture from quantum information theory, which has resisted proof for decades (see \cite{Bus91,DPS04,RBSC04}).

The reason why it is so much easier to find a PCP decomposition of a pair $(X,Y)$ than a separable decomposition of the corresponding CLDUI state $\rho_{X,Y}$ is that the PCP decomposition of $(X,Y)$ only corresponds to a separable decomposition of \emph{some} state $\rho$ with the same entries in the non-zero positions of $\rho_{X,Y}$, but not necessarily $\rho_{X,Y}$ itself. For example, if we return to the pair $(X,Y)$ from Equation~\ref{eq:pcp_all_ones}, the PCP decomposition $\mathbf{v_1} = \mathbf{w_1} = (1,1,1)^T$ corresponds (as shown in the proof of Theorem~\ref{thm:ccp_separable}(c)) to the separable decomposition $\mathbf{v_1}\mathbf{v}_{\mathbf{1}}^* \otimes \overline{\mathbf{w_1}\mathbf{w}_{\mathbf{1}}}^*$ of the state
\[
    \rho = \left[\begin{array}{ccc|ccc|ccc}
        1 & 1 & 1 & 1 & 1 & 1 & 1 & 1 & 1 \\
        1 & 1 & 1 & 1 & 1 & 1 & 1 & 1 & 1 \\
        1 & 1 & 1 & 1 & 1 & 1 & 1 & 1 & 1 \\\hline
        1 & 1 & 1 & 1 & 1 & 1 & 1 & 1 & 1 \\
        1 & 1 & 1 & 1 & 1 & 1 & 1 & 1 & 1 \\
        1 & 1 & 1 & 1 & 1 & 1 & 1 & 1 & 1 \\\hline
        1 & 1 & 1 & 1 & 1 & 1 & 1 & 1 & 1 \\
        1 & 1 & 1 & 1 & 1 & 1 & 1 & 1 & 1 \\
        1 & 1 & 1 & 1 & 1 & 1 & 1 & 1 & 1
    \end{array}\right].
\]
The reason why this shows separability of $\rho_{X,Y}$ is then simply that $\int_U (U \otimes \overline{U}) \rho (U \otimes \overline{U})^* \, dU = \rho_{X,Y}$, and the twirling (integration) operation on the left preserves separability.

In fact, this is exactly why using PCP matrices for the separability problem is so useful in the first place. Finding a PCP decomposition of a pair $(X,Y)$ is equivalent to searching for a separable decomposition of \emph{any} (not necessarily CLDUI) state that twirls to $\rho_{X,Y}$, so it is typically possible to find a much simpler decomposition than $\rho_{X,Y}$ itself has.

\section{The Absolute Separability Problem}\label{sec:abs_sep}

A quantum state $\rho \in M_n(\mathbb{C}) \otimes M_n(\mathbb{C})$ is called \emph{absolutely separable} \cite{KZ00} if $U\rho U^*$ is separable for all unitary $U \in M_n(\mathbb{C}) \otimes M_n(\mathbb{C})$. A long-standing open question asks for a characterization of the absolutely separable states \cite{OpenProb15}, and we now use our results to make some progress on this problem. In particular, we investigate the question of whether or not the set of absolutely separable states coincides with the set of state that are \emph{absolutely PPT}: states $\rho$ with the property that $U\rho U^*$ has positive partial transpose (PPT) for all unitary matrices $U \in M_m(\mathbb{C}) \otimes M_n(\mathbb{C})$ \cite{Hil07}.

In the $n = 2$ case, the sets of absolutely separable and absolutely PPT states have been completely characterized \cite{Hil07,Joh13,VAD01}, and they both equal the set of states with eigenvalues $\lambda_1 \geq \lambda_2 \geq \lambda_3 \geq \lambda_4 \geq 0$ such that
\begin{align}\label{eq:22_abs_sep}
    \begin{bmatrix}2\lambda_4 & \lambda_3 - \lambda_1 \\ \lambda_3 - \lambda_1 & 2\lambda_2\end{bmatrix} \succeq 0.
\end{align}
Alternatively, another way of looking at the absolute separability problem when $n = 2$ is that $\rho$ is absolutely separable if and only if $U_1 \Lambda U_1^*$ is separable, where $\Lambda$ is the diagonal matrix with diagonal entries $\lambda_1, \ldots, \lambda_4$, and $U_1$ is the following unitary matrix (where we use $\cdot$ to denote entries equal to $0$):
\[
    U_1 = \begin{bmatrix}
        \cdot & \cdot & \cdot & 1 \\ 1/\sqrt{2} & \cdot & 1/\sqrt{2} & \cdot \\ -1/\sqrt{2} & \cdot & 1/\sqrt{2} & \cdot \\ \cdot & 1 & \cdot & \cdot
    \end{bmatrix}.
\]
To see the equivalence of these two characterizations of absolute separability, we can directly compute
\[
    (id \otimes T)(U_1 \Lambda U_1^*) = \frac{1}{2}\left[\begin{array}{cc|cc}
        2\lambda_4 & \cdot & \cdot & \lambda_3 - \lambda_1 \\
        \cdot & \lambda_3 + \lambda_1 & \cdot & \cdot \\\hline
        \cdot & \cdot & \lambda_3 + \lambda_1 & \cdot \\
        \lambda_3 - \lambda_1 & \cdot & \cdot & 2\lambda_2
    \end{array}\right],
\]
which has a block-diagonal form that ensures that it is positive semidefinite (and thus $U_1 \Lambda U_1^*$ is separable) if and only if the positive semidefinite condition~\eqref{eq:22_abs_sep} holds. In other words, separability of $U\rho U^*$ for \emph{all} unitary matrices $U \in M_2(\mathbb{C}) \otimes M_2(\mathbb{C})$ really only depends on separability of $U\rho U^*$ for one very special unitary.

The set of absolutely PPT states has also been completely characterized when $n \geq 3$, but the details are quite a bit more complicated. We now provide some of these details, but for a full and rigorous description, we refer the interested reader to \cite{Hil07}. We start by constructing several linear maps $L_j : M_n(\mathbb{R}) \rightarrow \mathbb{R}^{n^2}$. To illustrate how these linear maps $\{L_j\}$ are constructed, suppose $\alpha_1 \geq \alpha_2 \geq \cdots \geq \alpha_n \geq 0$ are real numbers and consider the possible orderings of their products $\alpha_j^2$ ($1 \leq j \leq m$) and $\pm\alpha_i\alpha_j$ ($1 \leq i \neq j \leq n$). For example, if $n = 2$ then the only possible ordering is $\alpha_1^2 \geq \alpha_1\alpha_2 \geq \alpha_2^2 \geq -\alpha_1\alpha_2$, whereas if $n = 3$ then there are two possible orderings:
\begin{align*}
	\alpha_1^2 & \geq \alpha_1\alpha_2 \geq \alpha_1\alpha_3 \geq \alpha_2^2 \geq \alpha_2\alpha_3 \geq \alpha_3^2 \geq -\alpha_2\alpha_3 \geq -\alpha_1\alpha_3 \geq -\alpha_1\alpha_2 \quad \text{or} \\
	\alpha_1^2 & \geq \alpha_1\alpha_2 \geq \alpha_2^2 \geq \alpha_1\alpha_3 \geq \alpha_2\alpha_3 \geq \alpha_3^2 \geq -\alpha_2\alpha_3 \geq -\alpha_1\alpha_3 \geq -\alpha_1\alpha_2.
\end{align*}

For each of these orderings, we associate a linear map $L_j : M_n(\mathbb{R}) \rightarrow \mathbb{R}^{n^2}$ by placing $\pm y_{i,j}$ into the position of $L_j(Y)$ where $\pm \alpha_i\alpha_j$ appears in the associated ordering (and actually $L_j$ is just a linear map on \emph{symmetric} matrices, not all matrices, so that we do not have to worry about distinguishing between $y_{i,j}$ and $y_{j,i}$). For example, in the $n = 2$ case, there is just one linear map $L_1 : M_2(\mathbb{R}) \rightarrow \mathbb{R}^{4}$, and it is
\[
	L_1\left(\begin{bmatrix}y_{1,1} & y_{1,2} \\ y_{1,2} & y_{2,2}\end{bmatrix}\right) = (y_{1,1},y_{1,2},y_{2,2},-y_{1,2}).
\]
Similarly, in the $n = 3$ case there are two linear maps $L_1,L_2 : M_3(\mathbb{R}) \rightarrow \mathbb{R}^{9}$, and they are
\begin{align}\begin{split}\label{eq:3x3_L_maps}
	L_1\left(\begin{bmatrix}y_{1,1} & y_{1,2} & y_{1,3} \\ y_{1,2} & y_{2,2} & y_{2,3} \\ y_{1,3} & y_{2,3} & y_{3,3}\end{bmatrix}\right) & = (y_{1,1},y_{1,2},y_{1,3},y_{2,2},y_{2,3},y_{3,3},-y_{2,3},-y_{1,3},-y_{1,2}) \quad \text{and} \\
	L_2\left(\begin{bmatrix}y_{1,1} & y_{1,2} & y_{1,3} \\ y_{1,2} & y_{2,2} & y_{2,3} \\ y_{1,3} & y_{2,3} & y_{3,3}\end{bmatrix}\right) & = (y_{1,1},y_{1,2},y_{2,2},y_{1,3},y_{2,3},y_{3,3},-y_{2,3},-y_{1,3},-y_{1,2}).
\end{split}\end{align}

The number of distinct possible orderings (and thus the number of linear maps to be considered) grows exponentially in $n$. For $n = 2, 3, 4, \ldots$, this quantity equals $1, 2, 10, 114, 2608, 107498, \ldots$, though no formula is known for computing it in general \cite{oeisA237749}.

The main result of \cite{Hil07} says that $\rho$ is absolutely PPT if and only if its eigenvalues $\lambda_1 \geq \lambda_2 \geq \cdots \geq \lambda_{n^2}$ are such that $L_j^*(\lambda_{n^2},\lambda_{n^2-1},\ldots,\lambda_{2},\lambda_{1}) \succeq O$ for all $j$. For example, when $n = 2$ this is exactly equivalent to the positive semidefinite requirement~\eqref{eq:22_abs_sep}, and if $n = 3$ then it says that $\rho$ is absolutely PPT if and only if its eigenvalues satisfy
\begin{align}\label{eq:abs_ppt_3x3}
\begin{bmatrix}
2\lambda_9 & \lambda_8 - \lambda_1 & \lambda_6 - \lambda_2 \\
\lambda_8 - \lambda_1 & 2\lambda_7 & \lambda_5 - \lambda_3 \\
\lambda_6 - \lambda_2 & \lambda_5 - \lambda_3 & 2\lambda_4
\end{bmatrix} \succeq O \quad \text{and} \quad \begin{bmatrix}
2\lambda_9 & \lambda_8 - \lambda_1 & \lambda_7 - \lambda_2 \\
\lambda_8 - \lambda_1 & 2\lambda_6 & \lambda_5 - \lambda_3 \\
\lambda_7 - \lambda_2 & \lambda_5 - \lambda_3 & 2\lambda_4
\end{bmatrix}\succeq O.
\end{align}

We can also view this characterization of absolutely PPT states as a statement that, instead of checking whether or not $(U \rho U^*)^\Gamma \succeq O$ for all unitary $U$, it suffices to just check a certain finite number of unitaries (with each unitary corresponding to one of the possible orderings discussed earlier).

For example, in the $n = 3$ case, $\rho$ is absolutely PPT if and only if $U_1 \Lambda U_1^*$ and $U_2 \Lambda U_2^*$ are PPT, where $\Lambda$ is the diagonal matrix with diagonal entries $\lambda_1, \lambda_2, \ldots, \lambda_9$, and $U_1$ and $U_2$ are the unitary matrices
\begin{align*}
    U_1 & = \begin{bmatrix}
        \cdot & \cdot & \cdot & \cdot & \cdot & \cdot & \cdot & \cdot & 1 \\
        1/\sqrt{2} & \cdot & \cdot & \cdot & \cdot & \cdot & \cdot & 1/\sqrt{2} & \cdot \\
        \cdot & 1/\sqrt{2} & \cdot & \cdot & \cdot & 1/\sqrt{2} & \cdot & \cdot & \cdot \\
        -1/\sqrt{2} & \cdot & \cdot & \cdot & \cdot & \cdot & \cdot & 1/\sqrt{2} & \cdot \\
        \cdot & \cdot & \cdot & \cdot & \cdot & \cdot & 1 & \cdot & \cdot \\
        \cdot & \cdot & 1/\sqrt{2} & \cdot & 1/\sqrt{2} & \cdot & \cdot & \cdot & \cdot \\
        \cdot & -1/\sqrt{2} & \cdot & \cdot & \cdot & 1/\sqrt{2} & \cdot & \cdot & \cdot \\
        \cdot & \cdot & -1/\sqrt{2} & \cdot & 1/\sqrt{2} & \cdot & \cdot & \cdot & \cdot \\
        \cdot & \cdot & \cdot & 1 & \cdot & \cdot & \cdot & \cdot & \cdot
    \end{bmatrix} \\
    U_2 & = \begin{bmatrix}
        \cdot & \cdot & \cdot & \cdot & \cdot & \cdot & \cdot & \cdot & 1 \\
        1/\sqrt{2} & \cdot & \cdot & \cdot & \cdot & \cdot & \cdot & 1/\sqrt{2} & \cdot \\
        \cdot & 1/\sqrt{2} & \cdot & \cdot & \cdot & \cdot & 1/\sqrt{2} & \cdot & \cdot \\
        -1/\sqrt{2} & \cdot & \cdot & \cdot & \cdot & \cdot & \cdot & 1/\sqrt{2} & \cdot \\
        \cdot & \cdot & \cdot & \cdot & \cdot & 1 & \cdot & \cdot & \cdot \\
        \cdot & \cdot & 1/\sqrt{2} & \cdot & 1/\sqrt{2} & \cdot & \cdot & \cdot & \cdot \\
        \cdot & -1/\sqrt{2} & \cdot & \cdot & \cdot & \cdot & 1/\sqrt{2} & \cdot & \cdot \\
        \cdot & \cdot & -1/\sqrt{2} & \cdot & 1/\sqrt{2} & \cdot & \cdot & \cdot & \cdot \\
        \cdot & \cdot & \cdot & 1 & \cdot & \cdot & \cdot & \cdot & \cdot
    \end{bmatrix}.
\end{align*}
The reason for this is simply that $(id \otimes T)(U_1 \Lambda U_1^*)$ equals
\[
    \frac{1}{2}\left[\begin{array}{ccc|ccc|ccc}
        2\lambda_9 & \cdot & \cdot & \cdot & \lambda_8-\lambda_1 & \cdot & \cdot & \cdot & \lambda_6-\lambda_2 \\
        \cdot & \lambda_8+\lambda_1 & \cdot & \cdot & \cdot & \cdot & \cdot & \cdot & \cdot \\
        \cdot & \cdot & \lambda_6+\lambda_2 & \cdot & \cdot & \cdot & \cdot & \cdot & \cdot \\\hline
        \cdot & \cdot & \cdot & \lambda_8+\lambda_1 & \cdot & \cdot & \cdot & \cdot & \cdot \\
        \lambda_8-\lambda_1 & \cdot & \cdot & \cdot & 2\lambda_7 & \cdot & \cdot & \cdot & \lambda_5-\lambda_3 \\
        \cdot & \cdot & \cdot & \cdot & \cdot & \lambda_5+\lambda_3 & \cdot & \cdot & \cdot \\\hline
        \cdot & \cdot & \cdot & \cdot & \cdot & \cdot & \lambda_6+\lambda_2 & \cdot & \cdot \\
        \cdot & \cdot & \cdot & \cdot & \cdot & \cdot & \cdot & \lambda_5+\lambda_3 & \cdot \\
        \lambda_6-\lambda_2 & \cdot & \cdot & \cdot & \lambda_5-\lambda_3 & \cdot & \cdot & \cdot & 2\lambda_5
    \end{array}\right]
\]
and $(id \otimes T)(U_2 \Lambda U_2^*)$ equals
\[
    \frac{1}{2}\left[\begin{array}{ccc|ccc|ccc}
        2\lambda_9 & \cdot & \cdot & \cdot & \lambda_8-\lambda_1 & \cdot & \cdot & \cdot & \lambda_7-\lambda_2 \\
        \cdot & \lambda_8+\lambda_1 & \cdot & \cdot & \cdot & \cdot & \cdot & \cdot & \cdot \\
        \cdot & \cdot & \lambda_7+\lambda_2 & \cdot & \cdot & \cdot & \cdot & \cdot & \cdot \\\hline
        \cdot & \cdot & \cdot & \lambda_8+\lambda_1 & \cdot & \cdot & \cdot & \cdot & \cdot \\
        \lambda_8-\lambda_1 & \cdot & \cdot & \cdot & 2\lambda_6 & \cdot & \cdot & \cdot & \lambda_5-\lambda_3 \\
        \cdot & \cdot & \cdot & \cdot & \cdot & \lambda_5+\lambda_3 & \cdot & \cdot & \cdot \\\hline
        \cdot & \cdot & \cdot & \cdot & \cdot & \cdot & \lambda_7+\lambda_2 & \cdot & \cdot \\
        \cdot & \cdot & \cdot & \cdot & \cdot & \cdot & \cdot & \lambda_5+\lambda_3 & \cdot \\
        \lambda_6-\lambda_2 & \cdot & \cdot & \cdot & \lambda_5-\lambda_3 & \cdot & \cdot & \cdot & 2\lambda_5
    \end{array}\right],
\]
which are positive semidefinite if and only if the matrices~\eqref{eq:abs_ppt_3x3} are positive semidefinite.

Despite this characterization of absolutely PPT states, not much is known about the set of absolutely separable states when $n \geq 3$. For example, it is not even known whether or not it equals the set of absolutely PPT states, though some evidence that suggests that these two sets may coincide was provided in \cite{AJR15}. The results of this paper provide further evidence in favor of these sets being identical, since they imply that whenever the states $(id \otimes T)(U_j \Lambda U_j^*)$ are PPT they are also necessarily separable.

To see why this is the case, we just note that if $U_j$ is one of the unitaries coming from from the characterization of absolutely PPT states in the manner described above and $\Lambda$ is diagonal with real entries in non-increasing order then $(id \otimes T)(U_j\Lambda U_j^*)$ is a CLDUI state (simply by virtue of its zero pattern) with non-positive off-diagonal entries (since the off-diagonal entries all consist of a small eigenvalue minus a larger eigenvalue). Thus $(id \otimes T)(U_j\Lambda U_j^*)$ equals its own comparison matrix $M((id \otimes T)(U_j\Lambda U_j^*))$, so Corollary~\ref{cor:ldui_state_m_matrix} tells us that $U_j\Lambda U_j^*$ is separable whenever it has PPT.

This does not quite prove that the set of absolutely separable states equals the set of absolutely PPT states, however, as we do not know whether or not $U_j\Lambda U_j^*$ being separable for all of these special unitary matrices $\{U_j\}$ means that $U\Lambda U^*$ is separable for \emph{all} unitary matrices. Nonetheless, we find the fact that these matrices are separable in all dimensions quite surprising, and this rules out the ``obvious'' method of finding a gap between absolute separability and absolute PPT, which would be to find entanglement in one of these special states of the form $U_j\Lambda U_j^*$.

\section{Conclusions and Future Work}\label{sec:conclusions}

In this work, we introduced a family of matrix pairs that we called \emph{pairwise completely positive} as a generalization of completely positive matrices. We presented numerous simple necessary and sufficient conditions that can be used to determine whether or not certain matrix pairs are PCP, and we showed that PCP pairs correspond in a one-to-one fashion with separable conjugate local diagonal unitary invariant quantum states.

The most obvious open question remaining open from this work is the absolute separability problem. It was already known that there exists a finite set of unitaries $\{U_j\}$ for which $U_j \Lambda U_j^*$ being PPT for all $j$ implies that $\Lambda$ is absolutely PPT (i.e., $U \Lambda U^*$ is PPT for \emph{all} unitaries). We showed that if $U_j \Lambda U_j^*$ is PPT then $U_j \Lambda U_j^*$ must also be separable, so the natural follow-up question is whether or not $U_j \Lambda U_j^*$ being separable for all $j$ implies that $\Lambda$ is absolutely separable (i.e., $U \Lambda U^*$ is separable for \emph{all} unitaries).

As one possible generalization of this work, one could consider the quantum states that are invariant under all \emph{real} diagonal unitary matrices (i.e., diagonal matrices with diagonal entries equal to $\pm 1$). Such states have the slightly more general form
\begin{align*}
    \rho = \sum_{i,j=1}^n x_{i,j} \mathbf{e_i}\mathbf{e}_{\mathbf{j}}^* \otimes \mathbf{e_i}\mathbf{e}_{\mathbf{j}}^* + \sum_{i \neq j=1}^n y_{i,j} \mathbf{e_i}\mathbf{e}_{\mathbf{i}}^* \otimes \mathbf{e_j}\mathbf{e}_{\mathbf{j}}^* + \sum_{i \neq j=1}^n z_{i,j} \mathbf{e_i}\mathbf{e}_{\mathbf{j}}^* \otimes \mathbf{e_j}\mathbf{e}_{\mathbf{i}}^*,
\end{align*}
which are CLDUI states exactly when $z_{i,j} = 0$ for all $i,j$. Checking separability of these states in a manner analogous to that of Theorem~\ref{thm:ccp_separable} leads to a triple of matrices $(X,Y,Z)$ with the property that $\rho$ is separable if and only if there exist matrices $V,W \in M_{n,m}(\mathbb{C})$ such that
\begin{align*}
    X = (V \odot W)(V \odot W)^*, \qquad Y = (V \odot \overline{V})(W \odot \overline{W})^*, \quad \text{and} \quad Z = (V \odot \overline{W})(V \odot \overline{W})^*.
\end{align*}
In this case, it is clear that $Z$ is positive semidefinite (in addition to $X$ being positive semidefinite and $Y$ being entrywise non-negative) and all three matrices have the same diagonal entries. However, we did not investigate further properties of these matrix triples, as our interest in pairwise completely positive matrices came primarily from the absolute separability problem, in which case the relevant states are CLDUI.\\

\noindent \textbf{Acknowledgements.} N.J.\ was supported by NSERC Discovery Grant number RGPIN-2016-04003 and O.M.\ was supported by an NSERC Undergraduate Student Research Award. The authors thank users ``GH from MO'' and ``Mark Wildon'' at MathOverflow for providing the proof of Lemma~\ref{lem:cs_stronger} \cite{MO_Lemma}.

\bibliographystyle{ieeetr}
\bibliography{bib}

\begin{thebibliography}{10}

\bibitem{BDS15}
A.~Berman, M.~Dur, and N.~Shaked-Monderer, ``Open problems in the theory of
  completely positive and copositive matrices,'' {\em Electronic Journal of
  Linear Algebra}, vol.~29, pp.~46--58, 2015.

\bibitem{Ber88}
A.~Berman, ``Complete positivity,'' {\em Linear Algebra Appl.}, vol.~107,
  pp.~57--63, 1988.

\bibitem{BS03}
A.~Berman and N.~Shaked-Monderer, {\em Completely Positive Matrices}.
\newblock World Scientific, 2003.

\bibitem{Bom12}
I.~M. Bomze, ``Copositive optimization - recent developments and
  applications,'' {\em European Journal of Operational Research}, vol.~216,
  pp.~509--520, 2012.

\bibitem{Dur10}
M.~D\"{u}r, ``Copositive programming - a survey,'' in {\em Recent Advances in
  Optimization and its Applications in Engineering} (M.~Diehl, F.~Glineur,
  E.~Jarlebring, and W.~Michiels, eds.), pp.~3--20, Springer, 2010.

\bibitem{Yu16}
N.~Yu, ``Separability of a mixture of dicke states,'' {\em Physical Review A},
  vol.~94, p.~060101(R), 2016.

\bibitem{TAQLS17}
J.~Tura, A.~Aloy, R.~Quesada, M.~Lewenstein, and A.~Sanpera, ``Separability of
  mixed {D}icke states: an {NP}-hard optimization problem,'' {\em Quantum},
  vol.~2, p.~45, 2018.

\bibitem{BCP14}
T.~Baumgratz, M.~Cramer, and M.~B. Plenio, ``Quantifying coherence,'' {\em
  Physical Review Letters}, vol.~113, p.~140401, 2014.

\bibitem{DG14}
P.~J.~C. Dickinson and L.~Gijben, ``On the computational complexity of
  membership problems for the completely positive cone and its dual,'' {\em
  Computational Optimization and Applications}, vol.~57, pp.~403--415, 2014.

\bibitem{MM62}
J.~E. Maxfield and H.~Minc, ``On the matrix equation $x^\prime x = a$,'' {\em
  Proc. Edinburgh Math. Soc. Series II}, vol.~13, pp.~125--129, 1962.

\bibitem{Kay87}
M.~Kaykobad, ``On nonnegative factorization matrices,'' {\em Linear Algebra
  Appl.}, vol.~96, pp.~27--33, 1987.

\bibitem{DJL94}
J.~H. Drew, C.~R. Johnson, and R.~Loewy, ``Completely positive matrices
  associated with {M}-matrices,'' {\em Linear and Multilinear Algebra},
  vol.~37, pp.~303--310, 1994.

\bibitem{Wer89}
R.~F. Werner, ``Quantum states with {E}instein--{P}odolsky--{R}osen
  correlations admitting a hidden-variable model,'' {\em Physical Review A},
  vol.~40, pp.~4277--4281, 1989.

\bibitem{Gha10}
S.~Gharibian, ``Strong {NP}-hardness of the quantum separability problem,''
  {\em Quantum Inf. Comput.}, vol.~10, pp.~343--360, 2010.

\bibitem{Gur03}
L.~Gurvits, ``Classical deterministic complexity of {E}dmonds' problem and
  quantum entanglement,'' in {\em Proceedings of the Thirty-Fifth Annual ACM
  Symposium on Theory of Computing}, pp.~10--19, 2003.

\bibitem{Per96}
A.~Peres, ``Separability criterion for density matrices,'' {\em Physical Review
  Letters}, vol.~77, pp.~1413--1415, 1996.

\bibitem{HHH96}
M.~Horodecki, P.~Horodecki, and R.~Horodecki, ``Separability of mixed states:
  Necessary and sufficient conditions,'' {\em Phys. Lett. A}, vol.~223,
  pp.~1--8, 1996.

\bibitem{CW03}
K.~Chen and L.-A. Wu, ``A matrix realignment method for recognizing
  entanglement,'' {\em Quantum Inf. Comput.}, vol.~3, pp.~193--202, 2003.

\bibitem{Rud00}
O.~Rudolph, ``A separability criterion for density operators,'' {\em J. Phys.
  A: Math. Gen.}, vol.~33, pp.~3951--3955, 2000.

\bibitem{GT09}
O.~G\"{u}hne and G.~Toth, ``Entanglement detection,'' {\em Physics Reports},
  vol.~474, pp.~1--75, 2009.

\bibitem{HHH09}
R.~Horodecki, P.~Horodecki, M.~Horodecki, and K.~Horodecki, ``Quantum
  entanglement,'' {\em Reviews of Modern Physics}, vol.~81, pp.~865--942, 2009.

\bibitem{HH99}
M.~Horodecki and P.~Horodecki, ``Reduction criterion of separability and limits
  for a class of distillation protocols,'' {\em Phys. Rev. A}, vol.~59,
  pp.~4206--4216, 1999.

\bibitem{PPPR18}
S.~K. Pandey, V.~I. Paulsen, J.~Prakash, and M.~Rahaman, ``Entanglement
  breaking rank,'' {\em E-print: arXiv:1805.04583 [quant-ph]}, 2018.

\bibitem{Bus91}
P.~Busch, ``Informationally complete sets of physical quantities,'' {\em
  Internat. J. Theoret. Phys.}, vol.~30, no.~9, pp.~1217--1227, 1991.

\bibitem{DPS04}
G.~M. D'Ariano, P.~Perinotti, and M.~F. Sacchi, ``Informationally complete
  measurements and group representation,'' {\em Journal of Optics B: Quantum
  and Semiclassical Optics}, vol.~6, no.~6, pp.~S487--S491, 2004.

\bibitem{RBSC04}
J.~M. Renes, R.~Blume-Kohout, A.~J. Scott, and C.~M. Caves, ``Symmetric
  informationally complete quantum measurements,'' {\em J. Math. Phys.},
  vol.~45, no.~6, pp.~2171--2180, 2004.

\bibitem{KZ00}
M.~Ku{\'s} and K.~{\.Z}yczkowski, ``Geometry of entangled states,'' {\em Phys.
  Rev. A}, vol.~63, p.~032307, 2001.

\bibitem{OpenProb15}
E.~Knill, ``Separability from spectrum (problem 15).'' Published electronically
  at \url{https://oqp.iqoqi.univie.ac.at/separability-from-spectrum/}, 2003.

\bibitem{Hil07}
R.~Hildebrand, ``Positive partial transpose from spectra,'' {\em Phys. Rev. A},
  vol.~76, p.~052325, 2007.

\bibitem{Joh13}
N.~Johnston, ``Separability from spectrum for qubit--qudit states,'' {\em Phys.
  Rev. A}, vol.~88, p.~062330, 2013.

\bibitem{VAD01}
F.~Verstraete, K.~Audenaert, and B.~D. Moor, ``Maximally entangled mixed states
  of two qubits,'' {\em Phys. Rev. A}, vol.~64, p.~012316, 2001.

\bibitem{oeisA237749}
N.~Johnston, ``The {O}n-{L}ine {E}ncyclopedia of {I}nteger {S}equences.''
  A237749, February 2014.
\newblock The number of possible orderings of the real numbers $x_i x_j$ ($i
  \leq j$), subject to the constraint that $x_1 > x_2 > ... > x_n > 0$.

\bibitem{AJR15}
S.~Arunachalam, N.~Johnston, and V.~Russo, ``Is absolute separability
  determined by the partial transpose?,'' {\em Quantum Information and
  Computation}, vol.~15, pp.~0694--0720, 2015.

\bibitem{MO_Lemma}
N.~Johnston, ``A strengthening of the {C}auchy--{S}chwarz inequality.''
  MathOverflow.
\newblock \url{https://mathoverflow.net/q/301844}.

\end{thebibliography}

\section*{Appendix: A Stronger Cauchy--Schwarz Inequality}

We now prove Inequality~\eqref{eq:stronger_cs_ineq}, which is required to complete the proof of Theorem~\ref{thm:ccp_properties}(e). In particular, we have the following:

\begin{lemma}\label{lem:cs_stronger}
    Suppose $\mathbf{v},\mathbf{w} \in \mathbb{C}^n$. Then
    \begin{align*}
        0 \leq \|\mathbf{v\odot \overline{v}}\|\|\mathbf{w\odot \overline{w}}\| - \langle \mathbf{v\odot \overline{v}}, \mathbf{w\odot \overline{w}} \rangle \leq \|\mathbf{v}\|^2\|\mathbf{w}\|^2 - |\langle \mathbf{v}, \mathbf{w} \rangle|^2.
    \end{align*}
\end{lemma}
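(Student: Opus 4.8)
The plan is to handle the two inequalities separately. The left-hand inequality is immediate: the vectors $\mathbf{v}\odot\overline{\mathbf{v}}$ and $\mathbf{w}\odot\overline{\mathbf{w}}$ are honest (entrywise non-negative) vectors in $\mathbb{C}^n$, so the ordinary Cauchy--Schwarz inequality gives $\langle \mathbf{v}\odot\overline{\mathbf{v}}, \mathbf{w}\odot\overline{\mathbf{w}}\rangle \leq \|\mathbf{v}\odot\overline{\mathbf{v}}\|\,\|\mathbf{w}\odot\overline{\mathbf{w}}\|$, which is exactly what is claimed. All the content is in the right-hand inequality.

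First I would reduce to the case of real non-negative vectors. The left-hand side of the right inequality depends only on the moduli $|v_i|$ and $|w_i|$, whereas the only phase-dependent quantity on the right is $|\langle\mathbf{v},\mathbf{w}\rangle|^2$, and the triangle inequality gives $|\langle\mathbf{v},\mathbf{w}\rangle| \leq \sum_i |v_i|\,|w_i|$. Hence replacing $\mathbf{v},\mathbf{w}$ by the non-negative real vectors $\mathbf{p} = (|v_1|,\ldots,|v_n|)$ and $\mathbf{q} = (|w_1|,\ldots,|w_n|)$ leaves the left-hand side unchanged while only decreasing the right-hand side, so it suffices to prove the inequality for non-negative real $\mathbf{p},\mathbf{q}$.

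For that case I would set $X = \|\mathbf{p}\odot\mathbf{p}\| = \sqrt{\sum_i p_i^4}$, $Y = \|\mathbf{q}\odot\mathbf{q}\| = \sqrt{\sum_i q_i^4}$, and $Z = \langle\mathbf{p}\odot\mathbf{p},\mathbf{q}\odot\mathbf{q}\rangle = \sum_i p_i^2 q_i^2$, so the left-hand side is $XY - Z$. The key manoeuvre is to linearize the awkward product of square roots by writing $XY - Z = (X^2Y^2 - Z^2)/(XY+Z)$ (the degenerate case $XY+Z=0$ forces $\mathbf{p}$ or $\mathbf{q}$ to be zero, where both sides vanish) and then applying the Lagrange identity twice: the numerator becomes $X^2Y^2 - Z^2 = \sum_{i<j}(p_i^2 q_j^2 - p_j^2 q_i^2)^2$, while the target right-hand side is $\|\mathbf{p}\|^2\|\mathbf{q}\|^2 - \langle\mathbf{p},\mathbf{q}\rangle^2 = \sum_{i<j}(p_i q_j - p_j q_i)^2$. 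Factoring $p_i^2 q_j^2 - p_j^2 q_i^2 = (p_i q_j - p_j q_i)(p_i q_j + p_j q_i)$ then reduces the whole inequality to the single pairwise estimate $(p_i q_j + p_j q_i)^2 \leq XY + Z$ for every $i<j$.

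That pairwise estimate is where I expect the real difficulty to sit, but it splits cleanly into two elementary pieces: $p_i^2 q_j^2 + p_j^2 q_i^2 \leq XY$ follows from two-term Cauchy--Schwarz together with $p_i^4 + p_j^4 \leq \sum_k p_k^4$, while $2 p_i p_j q_i q_j \leq Z$ follows from AM--GM together with $p_i^2 q_i^2 + p_j^2 q_j^2 \leq \sum_k p_k^2 q_k^2$. Adding these yields $(p_i q_j + p_j q_i)^2 \leq XY+Z$; multiplying by $(p_i q_j - p_j q_i)^2 \geq 0$ and summing over pairs gives $\sum_{i<j}(p_i^2 q_j^2 - p_j^2 q_i^2)^2 \leq (XY+Z)\sum_{i<j}(p_i q_j - p_j q_i)^2$, and dividing by $XY+Z$ completes the argument. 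The main obstacle is really just spotting the identity $XY - Z = (X^2Y^2-Z^2)/(XY+Z)$, which converts the product of norms into a Lagrange-friendly difference of squares; once past that, the remaining steps are routine.
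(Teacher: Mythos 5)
Your proof is correct, and although it opens exactly as the paper's does---reducing to entrywise non-negative real vectors via the phase/triangle-inequality argument and then using Lagrange's identity to turn both sides into sums over pairs $i<j$---the heart of the argument is genuinely different. Writing $X = \|\mathbf{p}\odot\mathbf{p}\|$, $Y = \|\mathbf{q}\odot\mathbf{q}\|$, $Z = \langle \mathbf{p}\odot\mathbf{p}, \mathbf{q}\odot\mathbf{q}\rangle$ and $S = \sum_{i<j}(p_iq_j-p_jq_i)^2$, the paper proves $XY - Z \leq S$ by squaring the equivalent statement $XY \leq Z + S$, which reduces matters to showing $\sum_{i<j}(p_i^2q_j^2-p_j^2q_i^2)^2 \leq 2ZS + S^2$; that inequality is then established through a non-obvious sum-of-squares identity (which the authors credit to MathOverflow users in the acknowledgements). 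You instead rationalize, writing $XY - Z = (X^2Y^2-Z^2)/(XY+Z)$, factor each Lagrange term as $(p_iq_j-p_jq_i)(p_iq_j+p_jq_i)$, and reduce everything to the pairwise bound $(p_iq_j+p_jq_i)^2 \leq XY+Z$, which you correctly dispatch with two-term Cauchy--Schwarz and AM--GM. The two arguments thus pass through different intermediate inequalities---yours bounds $X^2Y^2 - Z^2$ by $(XY+Z)S$, the paper's by $(2Z+S)S$---and neither bound implies the other, though both yield the lemma. What your route buys is that every step is routine and locally motivated: no identity has to be guessed and then verified by expansion, so the proof is arguably more self-contained and explanatory. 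Your treatment of the degenerate case $XY+Z=0$ (which forces $\mathbf{p}$ or $\mathbf{q}$ to vanish, making both sides zero) is also correct, and is a necessary remark since you divide by $XY+Z$.
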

\begin{proof}
    The left inequality follows directly from Cauchy--Schwarz, so we focus on the right inequality. We note that without loss of generality we can assume that $\mathbf{v}$ and $\mathbf{w}$ have real non-negative entries, since multiplying any of the entries of these vectors by a complex phase $e^{i\theta}$ does not change the terms $\|\mathbf{v\odot \overline{v}}\|\|\mathbf{w\odot \overline{w}}\|$, $\langle \mathbf{v\odot \overline{v}}, \mathbf{w\odot \overline{w}}\rangle$, or $\|\mathbf{v}\|^2\|\mathbf{w}\|^2$, and the triangle inequality tells us that the term $|\langle \mathbf{v}, \mathbf{w} \rangle|^2$ is largest (and thus the inequality provided by this lemma is strongest) when $\mathbf{v}$ and $\mathbf{w}$ are real and entrywise non-negative.
    
    We now make use of Lagrange's identity to rewrite our statement. Lagrange's identity states:
    
\begin{align*}
\sum_{i< j}^{n} (a_{i}b_j - a_j b_i)^2= \sum_{i = 1}^n a_{i}^2\sum_{i = 1}^n b_{i}^2 - \Big(\sum_{i = 1}^na_ib_i\Big)^2                     
\end{align*}
The inequality we want to prove reads:
\begin{align*}
\left(\sum_i v_i^4\right)^{1/2}\left(\sum_i w_i^4\right)^{1/2}-\sum_i v_i^2 w_i^2\leq 
\left(\sum_i v_i^2\right)\left(\sum_i w_i^2\right)-\left(\sum_i v_i w_i\right)^2.
\end{align*}

\noindent
Rewriting the right hand side using Lagrange's identity we get:
\begin{align*}
\left(\sum_i v_i^4\right)^{1/2}\left(\sum_i w_i^4\right)^{1/2}-\sum_i v_i^2 w_i^2 \leq \sum_{i<j} (v_i{w_j}-v_j{w_i})^2.
\end{align*}
Equivalently,
\begin{align*}
\left(\sum_i v_i^4\right)\left(\sum_i w_i^4\right)\leq \left(\sum_i v_i^2 w_i^2+\sum_{i<j} (v_i{w_j}-v_j{w_i})^2\right)^2.
\end{align*}
\noindent
Rewriting the left hand side also by means of Lagrange's identity, the desired inequality reads: 
\begin{align*}
\Bigg(\sum_i  v_i^2 w_i^2\Bigg)^2+\sum_{i<j}(v_i^2w_j^2-v_j^2w_i^2)^2\leq\left(\sum_i v_i^2 w_i^2+\sum_{i<j}(v_iw_j-v_j w_i)^2\right)^2.
\end{align*}
Expanding the square and simplifying, we get the equivalent statement:
\begin{align}
\sum_{i<j}(v_i^2w_j^2-v_j^2w_i^2)^2\leq 2\left(\sum_i v_i^2w_i^2\right)\sum_{i<j} (v_i{w_j}-v_j{w_i})^2+\left(\sum_{i<j} (v_i{w_j}-v_j{w_i})^2\right)^2. \label{eq:thing}
\end{align}
Now, to actually prove the inequality holds, we make the observation that: 
\begin{align*}
&2\sum_{i<j}(v_iw_i-v_jw_j)^2(v_iw_j-v_jw_i)^2=\\
&2\sum_{i<j}(v_i^2w_i^2+v_j^2w_j^2)(v_iw_j-v_j w_i)^2+\sum_{i<j}(v_iw_j-v_j w_i)^4 - \sum_{i<j}(v_i^2w_j^2-v_j^2w_i^2)^2.
\end{align*}
Since this quantity is a sum of squares, it is positive. We thus have: 
\begin{align*}
\sum_{i<j}(v_i^2w_j^2-v_j^2w_i^2)^2 \leq  2\sum_{i<j}(v_i^2w_i^2+v_j^2w_j^2)(v_iw_j-v_j w_i)^2+\sum_{i<j}(v_iw_j-v_j w_i)^4.
\end{align*}
Then, since the right hand side of the last inequality is less than the right hand side of inequality \eqref{eq:thing}, we know it is stronger and so we have the following:
\begin{align*}
\sum_{i<j}(v_i^2w_j^2-v_j^2w_i^2)^2&\leq 2\sum_{i<j}(v_i^2w_i^2+v_j^2w_j^2)(v_iw_j-v_j w_i)^2+\sum_{i<j}(v_iw_j-v_j w_i)^4 \\ 
&\leq  2\left(\sum_i v_i^2w_i^2\right)\sum_{i<j} (v_i{w_j}-v_j{w_i})^2+\left(\sum_{i<j} (v_i{w_j}-v_j{w_i})^2\right)^2,
\end{align*}
and we have shown the desired inequality holds.
\end{proof}
\end{document}